\documentclass[journal]{IEEEtran}
%
\usepackage{bbm}
\usepackage{caption}
\usepackage{subfigure}
\usepackage{graphicx}
\usepackage{latexsym}
\usepackage{diagbox}
\usepackage{changepage}
\usepackage[fleqn]{amsmath}
\usepackage{amsmath}
\usepackage{amsfonts}
\usepackage{indentfirst}
\usepackage{CJK}
\usepackage{slashbox}
\usepackage{balance}
\usepackage{indentfirst}
\usepackage[varg]{txfonts}
\usepackage{stfloats}
\usepackage{multirow}%
\usepackage{booktabs}
\usepackage{color,soul}
\usepackage{epstopdf}
\usepackage{float}
\usepackage{bm}
\usepackage{cite}
\usepackage{makecell}
\usepackage{array}
\usepackage{bm}
\usepackage{mathtools}
\usepackage{geometry}
\geometry{left=1.6cm,right=1.6cm,top=2cm,bottom=2cm}
\usepackage[linesnumbered,ruled,commentsnumbered,longend]{algorithm2e}
\makeatletter

\newcommand{\Rmnum}[1]{\expandafter\@slowromancap\romannumeral #1@}

\makeatother
\makeatletter
\newtheorem{theorem}{Theorem}

\newenvironment{proof}[1][Proof]{\begin{trivlist}
		\item[\hskip \labelsep {\itshape #1}]}{\end{trivlist}}

\newenvironment{remark}[1][Remark]{\begin{trivlist}
		\item[\hskip \labelsep {\bfseries #1}]}{\end{trivlist}}

\newcommand{\qed}{\nobreak \ifvmode \relax \else
	\ifdim\lastskip<1.5em \hskip-\lastskip
	\hskip1.5em plus0em minus0.5em \fi \nobreak
	\vrule height0.75em width0.5em depth0.25em\fi}

\SetAlgoLongEnd


%

%
\ifCLASSINFOpdf
\else
\fi

\begin{document}

\title{Codebook-Based Max-Min Energy-Efficient Resource Allocation for Uplink mmWave MIMO-NOMA Systems}
\author{Wanming Hao,~\IEEEmembership{Member,~IEEE,} Ming Zeng,  Gangcan Sun, Osamu Muta,~\IEEEmembership{Member,~IEEE,} Octavia A. Dobre,~\IEEEmembership{Senior Member,~IEEE},  Shouyi Yang, Haris Gacanin,~\IEEEmembership{Senior Member,~IEEE}
	\thanks{This work was supported by Henan Post-Doctoral Science Foundation under Grant 19030015, and the National Natural Science Foundation of China under Grant U1604159. The work of Ming Zeng and Octavia A. Dobre have been supported by the Natural Sciences and Engineering Research Council of Canada (NSERC), though its Discovery program.}
	\thanks{W. Hao, G. Sun, and S. Yang are with the School of Information Engineering, Zhengzhou University, Zhengzhou 450001, China, and W. Hao is also with the 5G Innovation Centre, Institute of Communication Systems, University of Surrey, Guildford GU2 7XH, U.K. (E-mail: \{iewmhao, iegcsun, iesyyang\}@zzu.edu.cn)}
	\thanks{M. Zeng and O. A. Dobre are with the Faculty of Engineering and Applied Science, Memorial University, St. Johns, NL A1B 3X9, Canada. (E-mail: \{mzeng, odobre\}@mun.ca)}
	\thanks{O. Muta is with Center for Japan-Egypt Cooperation in Science and Technology, Kyushu University, Fukuoka 819-0395, Japan.  (E-mails: muta@\{ieee.org, ait.kyushu-u.ac.jp\})}
		\thanks{H. Gacanin is with Nokia Bell Labs, Antwerpen, Flanders, Belgium. (E-mail:harisg@ieee.org)}
	}

\maketitle
\begin{abstract}
In this paper, we investigate the energy-efficient resource allocation problem in an uplink non-orthogonal multiple access (NOMA) millimeter wave system, where the fully-connected-based sparse radio frequency chain antenna structure is applied at the base station (BS).  To relieve the pilot overhead for channel estimation, we propose a codebook-based analog beam design scheme,  which only requires to obtain the equivalent channel gain.  On this basis, users belonging to the same analog beam are served via NOMA. Meanwhile, an advanced NOMA decoding scheme is proposed by exploiting the global information available at the BS.  Under predefined minimum
rate and maximum transmit power constraints for each user, we formulate a max-min user energy efficiency (EE) optimization problem by jointly optimizing the detection matrix at the BS and transmit power at the users. 
We first transform the original fractional objective function into a subtractive one. Then, we propose a two-loop iterative algorithm to solve the reformulated problem. Specifically, the inner loop updates the detection matrix and transmit power iteratively, while the outer loop adopts the bi-section method.  Meanwhile, to decrease the complexity of the inner loop, we propose a zero-forcing (ZF)-based iterative algorithm, where the detection matrix is designed via the ZF technique. Finally, simulation results show that the proposed schemes obtain a better performance in terms of spectral efficiency and EE than the conventional schemes.
\end{abstract}

\begin{IEEEkeywords}
Codebook, energy efficiency, resource allocation, millimeter wave, NOMA.
\end{IEEEkeywords}

%
\IEEEpeerreviewmaketitle

\section{Introduction}
Millimeter wave (mmWave) technology has become a promising solution to satisfy the rapidly increasing capacity requirement in wireless networks. However, mmWave signals suffer from severe propagation loss due to high carrier frequency. To compensate the loss, a large number of antennas are usually employed at the base station (BS) to provide a large array gain~\cite{Rangan_Proc_2014}. Nonetheless, it is not practical to implement dedicated radio frequency (RF) chains for all antenna elements due to the high power consumption. For example, the power consumption of each RF chain at mmWave frequency can go up to 250 mW, which is five times larger than that in microwave frequency~\cite{Heath_JSTSP_2016,Dai_CMAG_2018}. Therefore, to reduce the energy consumption and hardware cost, advanced sparse RF chain  antenna structures have been applied, i.e., the number of RF chains is much lower than that of antennas. For example, fully-connected and~subarray structures are proposed in~\cite{Heath_TWC_2015}, while a lens-antenna array structure is investigated~\cite{Dai_JSAC_2017,Hao_TWC_2018}. 

{To achieve higher spectral efficiency (SE) for the wireless networks, non-orthogonal multiple access (NOMA) technology has been introduced as a promising solution. In general, the existing NOMA schemes can be classified into two categories: power-domain NOMA~\cite{Zeng_WC_2018} and code-domain NOMA ~\cite{Chi_TWC_2018,Chi_TVT_2019}. In power-domain NOMA, multiple users transmit their signals sharing the same time-frequency-code resources, while user signals are differentiated in power domain~\cite{Dobre_CST_2017, Liu_2017_JSAC}. Code-domain NOMA is similar to code division multiple access, and the major difference is that low-density sequences and/or  sparse code multiple access are used in the former~\cite{Dobre_VTM_2018}. In this paper, we focus on the power-domain NOMA technique.} By combining mmWave multiple-input multiple-output (MIMO) and NOMA technologies, a mmWave MIMO-NOMA system is formed, which  represents an effective scheme to satisfy the high capacity and service quality demands of the wireless networks.
	
There are two major challenges in mmWave MIMO-NOMA systems. The first one is related to user clustering, i.e, how to divide the users to form NOMA clusters. So far, most clustering schemes are designed under the assumption of perfect knowledge of the channel state information (CSI)~\cite{Zeng_JSAC_2017,Hao_WCL_2017,Zeng_WCL_2017,Guo_TVT_2019}. Specifically, users with highly correlated channels are grouped together. Although the above clustering approach achieves a good system performance, obtaining the perfect  CSI of all users is not practical. In particular,  when the BS is equipped with a large number of antennas, the pilot overhead is huge~\cite{Lu_JSAC_2014}. The other challenge is related to the energy efficiency (EE) optimization on the uplink, an important metric  to evaluate the system performance. {Unlike the works focusing on the downlink EE~\cite{Hao_WCL_2017,Dai_JSAC_2017,Hao_IoT_2019},  the optimization of uplink EE is more  challenging since the power allocation at users and beam design at the BS must be taken into consideration jointly. {Although joint optimization problems in the downlink exist in the  literature, such as  joint subcarrier and BS power allocation~\cite{Fu_TVT_2018,Li_TVT_2017} as well as joint user  access and BS power allocation~\cite{Moltafet_TCOM_2018, Fang_JSAC_2017}. However, these optimization problems are clearly different from our joint optimization problem of the detection matrix at the BS and  transmit power at the users.  On the other hand, for downlink MIMO-NOMA, the variables are often the beamforming matrix, and semidefinite programming (SDP) is usually adopted for solving such problems. In contrast, for our considered problem, both  detection matrix at the BS and power values at the users need to be optimized. Moreover, these two different variables are coupled in the problem formulation, since the signal-to-interference-plus-noise ratio  (SINR) of the users is a function of their multiplier. Because of this, SDP may no longer be used. Besides, in downlink, there is only a total power constraint, while in uplink, each user has its own power constraint. As a result, the existing solutions for downlink cannot be used to solve our considered problem. In this paper, we investigate an uplink EE optimization problem in the mmWave MIMO-NOMA  systems. The main contributions are summarized as follows:}

\begin{itemize}
	\item {{We design a user clustering scheme with analog beam alignment, where the BS sends the analog precoded reference signals (analog beams) generated by a given codebook in downlink. Users feed the received signal  strength and estimated CSI from all beam directions back to the BS, and the BS decides the optimal analog beams and user clustering according to the received information. After that, users belonging to the same cluster are served with NOMA.
	\item  An advanced NOMA decoding scheme is proposed by exploiting global information at the BS. To ensure user fairness, we consider maximizing the minimum user EE subject to limited transmit power and minimum rate requirements for each user. However, the above optimization problem is non-convex due to the non-convexity and non-smoothness of the fractional objective function, which is difficult to solve directly.
	\item  We transform the fractional objective function  into a subtractive one, and a two-loop iterative algorithm is proposed to solve the formulated problem.} For the inner loop, the transmit power and detection matrix are alternatively updated. Specifically, we first fix the transmit power and update the detection matrix, and then update the transmit power based on the obtained detection matrix.  The above iteration is carried out till convergence. For the outer loop, the classical bi-section algorithm is adopted.
	\item Finally, to reduce the complexity for the inner loop, we apply the zero-forcing (ZF) technique to obtain the detection matrix, and then only optimize the transmit power to  solve the formulated problem. Meanwhile, simulations are conducted, which show that the proposed schemes achieve a higher SE and EE in comparison  with the conventional schemes.}  
\end{itemize}

The remainder of this paper is organized as follows. In Section \Rmnum {2}, the related works are summarized. In Section \Rmnum {3}, the system model and max-min EE-optimal problem are introduced. In Section  \Rmnum {4}, the proposed two-loop iterative algorithm is presented. The ZF-based low-complexity algorithm is proposed in Section \Rmnum {5}. Numerical results are given in Section~\Rmnum {6}, and conclusions are drawn in Section \Rmnum {7}.

The notations in this paper are as follows: $(\cdot)^T$ and $(\cdot)^H$ represent the transpose and Hermitian transpose, respectively, $\|\cdot\|$ means the Euclidean norm, $\mathbb{E}\{\cdot\}$ means the expectation operator. {Re($\cdot$)} denotes the real number operation. The key acronyms are summarized in Table~\ref{table1}.
\section{Related Works}

{Currently, research on MIMO-NOMA has gained significant attention. \cite{N_TCOM_2019} introduces the NOMA transmission at an unmanned aerial vehicle BS for serving more users simultaneously. Then, a beam scanning approach is proposed to maximize the sum rate of the system. In~\cite{Lv_IoT_2018}, the authors assume that multiple machine-type communication (MTC) devices share the same communication resources in the mmWave-NOMA system, and propose an MTC pairing scheme based on the distance between the BS and the MTC devices. Finally, closed-form expressions of outage probability and sum rate are derived. A low-complexity iterative linear minimum mean square error (LMMSE) multiuser detector is proposed for the MIMO-NOMA system~\cite{Liu_TSP_2019}, and the authors prove that the proposed matched iterative LMMSE detector can achieve optimal capacity for any number of users. In~\cite{Mao_2018_EUR}, the authors compare the three multiple access schemes, including space-division multiple access, rate-splitting multiple access (RSMA) and NOMA, and their results show that the RSMA scheme can obtain a higher performance gain in comparison with the other two schemes under certain conditions. However, the above works mainly focus on the SE without considering the EE.}
	
For the EE problem in MIMO-NOMA system, the authors in~\cite{Dai_JSAC_2017} propose a NOMA scheme according to the formed beamspace, where users selecting the same beam are grouped into the same cluster. {Note that our investigated problem is totally different from~\cite{Dai_JSAC_2017} in the following two aspects: {{\textit{i)}} We investigate the user fairness-based EE maximization problem, while~\cite{Dai_JSAC_2017} considers the SE maximization problem. Note that EE in~\cite{Dai_JSAC_2017} is simply defined as the maximum sum rate over the corresponding consuming power, and thus, the EE maximization is actually the SE maximization; {\textit{ii)}} We design an optimization scheme for joint detection matrix at the BS and power allocation at the users.}  In contrast,~\cite{Dai_JSAC_2017} adopts the ZF precoding scheme and transforms the original problem into a power allocation one.} In~\cite{Hao_WCL_2017}, the authors propose a hybrid analog/digital precoding and power allocation scheme to maximize the EE of the system. Users are grouped into multiple NOMA clusters according to the channel correlation, and digital precoding design depends on the ZF technology to partially cancel the inter-cluster interference. {The authors in~\cite{Rahmati_ICC_2019} design the RSMA  and NOMA scheme in a cellular connected UAV network, and then investigate the EE of two schemes with mmWave.}

However, above works all focus on the downlink EE, and they mainly consider the  analog/digital precoding design at the BS. In this paper, we consider an uplink MIMO-NOMA system. In fact, the research of uplink EE is more meaningful because the user terminals are power-constrained. Furthermore, it is more challenging because both power optimization at user terminals and  beam design at the BS are needed. 
Although our previous work~\cite{Zeng_TVT_2019} considers the EE maximization problem in an uplink MIMO-NOMA mmWave network, the design of the detection matrix at the BS only depends on the ZF technique, which~limits the performance of the system. Furthermore, user fairness is not considered and the users clustering also depends on full CSI. {To summarize, compared with the previous works, the main contributions of this paper include: ({\textit{i}}) design an effective analog beam alignment-based user clustering scheme; ({\textit{ii}}) propose an advanced NOMA decoding scheme; ({\textit{iii}}) jointly optimize the users transmit power and the BS detection matrix  to maximize the minimum user EE such that user fairness is ensured.}

\begin{figure}[t]
	\begin{center}
		\includegraphics[width=9cm,height=4cm]{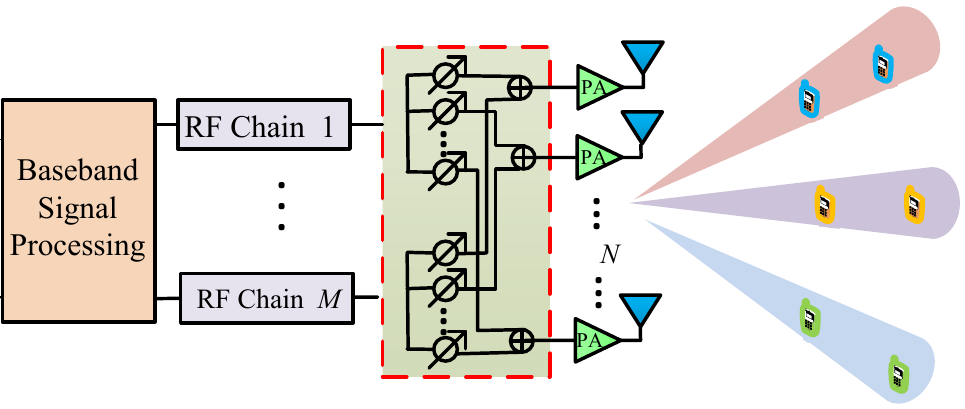}
		\caption{The uplink mmWave MIMO-NOMA model with sparse RF chain antennas structure.}
		\label{figure1}
	\end{center}
\end{figure}

\section{System Model and Problem Formulation}
In this section, we first describe the system model and beam alignment-based user clustering scheme. Then, an advanced NOMA decoding scheme is proposed by exploiting the global information available at the BS. Finally, we formulate the max-min uplink EE optimization problem.
\begin{table}[t]
	\caption{{Summary of Key Acronyms.}}
	\label{table1}
	\begin{center}
		\begin{tabular}{|c|c|}
			\hline
			Acronyms & Descriptions \\\hline
			MIMO&Multiple-input multiple-output \\\hline
			NOMA&Non-orthogonal multiple access\\\hline
			SE&Spectral efficiency\\\hline
			EE &Energy efficiency \\\hline
			BS & Base station \\\hline
			ZF & Zero-forcing\\\hline
			SIC & Successive interference cancellation\\\hline
			CSI&Channel state information\\\hline
			QoS& Quality-of-service\\\hline
			MmWave&Millimeter wave\\\hline
			RF&Radio frequency \\\hline
			LMMSE&Linear minimum mean square error\\\hline
			DFT&Discrete fourier transform\\\hline
			SINR&Signal-to-interference-plus-noise ratio\\\hline
			SNR&Signal-to-noise ratio\\\hline
			DC&Different of convex \\\hline
			CCCP&Constrained concave convex procedure\\\hline
			OMA&Orthogonal multiple access\\\hline
			TDMA&Time division multiple access\\\hline
			CSMA&Rate-splitting multiple access\\\hline
			MTC&Machine-type communication\\\hline
			SDP&Semidefinite programming\\\hline 
		\end{tabular}
	\end{center}
\end{table} 

\subsection{System Model and User Clustering}
We consider an uplink mmWave system as shown in Fig.~\ref{figure1}, where the BS is equipped with $N$ antennas and $M$ $(M\leq N)$ RF chains. Here, the fully-connected structure is considered, namely each RF chain is connected to all antennas through $N$ phase shifters, while each user is equipped with single antenna. We assume that the analog beam matrix is selected from a predefined codebook. To obtain high antenna gain and low beam gain loss, we adopt the  discrete Fourier transform (DFT) codebook, which is defined as~\cite{Zhou_PIMRC_2012}
\begin{eqnarray}
	{\bf{F}}(n,k)=\frac{1}{\sqrt{N}}e^{\frac{j2\pi(k-1)(n-1)}{K}}, \forall n\in \mathcal{N}, \forall k\in \mathcal{K},
\end{eqnarray}
where $\mathcal{N}\in\{1,\dots, N\}$ denotes the set of BS antennas, and $\mathcal{K}\in\{1,\dots,K\}$ is the set of beam patterns in codebook. Therefore, the DFT codebook $\bf{F}$ is a $N\times K$ matrix, and we show its polar plot in Fig.~\ref{figure2}. 

{Generally, in MIMO-NOMA systems, users are first grouped into clusters, and then, NOMA is applied among users within the same cluster~\cite{Lv_IoT_2019,Zhou_TWC}. In mmWave systems, users within one cluster typically share one analog beamformer since the number of RF chains is smaller than that of the transmit antennas~\cite{Dai_JSAC_2017}. Therefore, how to allocate the analog beamformer to users is important and challenging, especially when perfect CSI is unavailable.   Beam alignment is originally used for channel estimation via beam-scan procedure  in mmWave systems~\cite{Std_2009}. In this paper, we propose to apply it for user clustering. Meanwhile, the effective CSI can also be obtained with beam alignment.  The specific scheme and its rationality are described as follows: the BS first sends reference signals from each direction defined in codebook $\bf{F}$. Next, all users measure the received signal strength and estimate the effective channel at each beam direction. Here, each column of $\bf{F}$ stands for a beam direction, namely ${\bf{F}}=[{\bf{f}}_1,\dots,{\bf{f}}_K]$. To this end, the effective channel at beam direction ${\bf{f}}_k$ can be expressed as ${\bf{f}}_k^H{\bf{h}}_i$, where ${\bf{h}}_i\in\mathbb{C}^{N\times 1}$ denotes the channel gain between $N$ antennas at the BS and the $i$th user. After that, users feed the above results to the BS, including the received signal strength and estimated effective CSI. Meanwhile, the BS  decides the appropriate analog beam for each user based on the signal strength provided by the users.} Note that it is very likely that one analog beam is used to serve multiple users (especially for ultra-dense user distribution), which means that those users receive the strongest signal from the same beam, as shown in Fig.~\ref{figure1}. In this case, NOMA is employed among those users to improve the SE of the system.  In this paper, we assume that each analog beam can serve at least two users. Nonetheless, to decrease the decoding complexity, we only consider the two-user case, which is also the standard implementation for NOMA in Release 13 of the 3GPP. When more users are located in one beam coverage area, a proper user pair can be selected according to their channel gain difference, as in~\cite{Hao_WCL_2017}.  Note that our proposed scheme can be directly extended to an arbitrary number of users.  

\begin{figure}[t]
	\begin{center}
		\includegraphics[width=5cm,height=5cm]{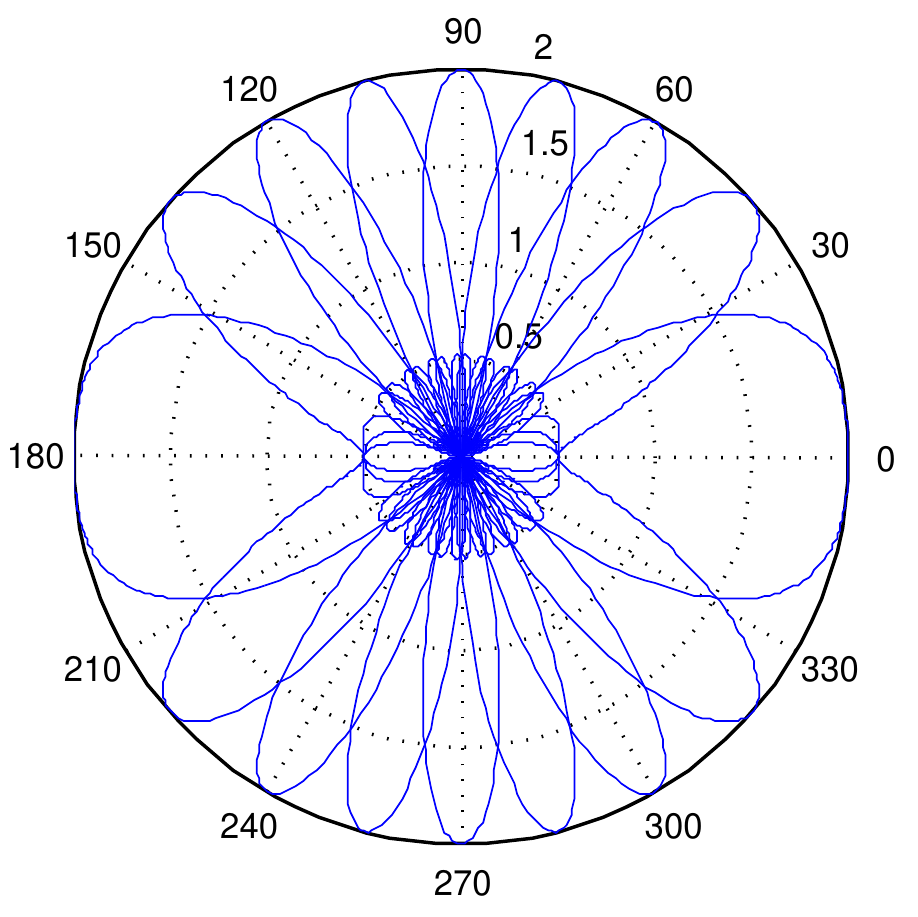}
		\caption{Polar plot for array factor of codebooks with $N=4$, $K=8$.}
		\label{figure2}
	\end{center}
\end{figure}

In this paper, we assume there are $M$ RF chains, which means that we can select $M$ directional beams from the DFT codebook. Towards lowering the inter-beam interference, the interval among selected beams is designed  as far as possible. For example, when there are 16 beams and 4 RF chains, namely ${\bf{F}}=[{\bf{f}}_1,{\bf{f}}_2,\dots,{\bf{f}}_{16}]$, we can form the analog beam at the BS as ${\bf{W}}_1=[{\bf{f}}_1,{\bf{f}}_5,{\bf{f}}_9,{\bf{f}}_{13}]^T$, ${\bf{W}}_2=[{\bf{f}}_2,{\bf{f}}_6,{\bf{f}}_{10},{\bf{f}}_{14}]^T$, ${\bf{W}}_3=[{\bf{f}}_3,{\bf{f}}_7,{\bf{f}}_{11},{\bf{f}}_{15}]^T$, and ${\bf{W}}_4=[{\bf{f}}_4,{\bf{f}}_8,{\bf{f}}_{12},{\bf{f}}_{16}]^T$. At each time slot, we can select any beam matrix ${\bf{W}}_j\;(j\in\{1,\dots,4\})$, while the remaining beam matrices can be selected at the next time slot. Based on this, the detected signal for the $m$th analog beam at the BS can be expressed~as
\begin{eqnarray}\label{eq1}
\begin{aligned}
	y_m\!=\!&\sum_{j=1}^{M}\sum_{i=1}^{2}{\bf{v}}_m{\bf{W}}{\bf{h}}_{ji}\sqrt{P_{ji}}s_{ji}+{\bf{v}}_m{\bf{W}}{\bf{n}}_m,\\
	\!=\!&\sum_{i=1}^{2}{\bf{v}}_m{\bf{W}}{\bf{h}}_{mi}\sqrt{P_{mi}}s_{mi}\!+\!\sum_{j\neq m}^{M}\sum_{i=1}^{2}{\bf{v}}_m{\bf{W}}{\bf{h}}_{ji}\sqrt{P_{ji}}s_{ji}\!+\!{\bf{v}}_m{\bf{W}}{\bf{n}}_m,
	\end{aligned}
\end{eqnarray}
where $s_{mi}$ and $P_{mi}$, respectively, denote the transmitted signal and power at the $i$th user of the $m$th analog beam (we refer to it as User $(m,i)$), satisfying ${\mathbb{E}}\{|s_{mi}|^2\}=1$. ${\bf{W}}$ is the analog beam matrix, and we omit the subscript for simplicity. ${\bf{v}}_m\in\mathbb{C}^{1\times N}$ represents the detection vector for User $(m,i)\;(i\in\{1,2\})$. ${\bf{h}}_{mi}$ denotes the channel coefficient from User $(m,i)$ to the BS. ${\bf{n}}_m$ is independent and identically distributed (i.i.d.) additive white Gaussian noise (AWGN), and each entry is defined as  $\mathcal{CN}(0,\delta^2)$. 

For the mmWave channel, we adopt a widely used geometric channel model with $G$ scatters, where each scatter is assumed to contribute a single propagation path between the BS and user~\cite{Heath_TWC_2015}. Then, the channel ${\bf{h}}_{mi}$ can be written as
\begin{eqnarray}
{\bf{h}}_{mi}=\sqrt{\frac{N}{G}}\sum_{g=1}^G\alpha_{mi}^g{\bf{a}}(\theta_{mi}^g),
\end{eqnarray}
where $\alpha_{mi}^g$ is the complex gain of the $g$-th path with $\alpha_{mi}^g\sim\mathcal{CN}(0,{\sigma^2})$. $\theta_{mi}^g\in[0,\pi]$ is the azimuth angle of arrival  for the $g$-th path, and ${\bf{a}}(\theta_{mi}^g)$ represents the antenna array steering vector, which can  be written as
\begin{eqnarray}\label{eq5}
{\bf{a}}(\theta_{mi}^g)={\frac{1}{\sqrt{N}}}\left[1,e^{j\frac{2\pi}{\lambda}d\sin(\theta_{mi}^g)},\ldots, e^{j\frac{2\pi}{\lambda}(N-1)d\sin(\theta_{mi}^g)}\right]^T,
\end{eqnarray}
where $d$ and $\lambda$ denote the inter-antenna distance and signal wavelength, respectively.

After selecting the analog beam matrix, we define the effective channel between the BS and User $(m,i)$ as $\bar{\bf{h}}_{mi}={\bf{W}}{\bf{h}}_{mi}$. Then, (\ref{eq1}) can be rewritten as
\begin{eqnarray}\label{eq2}
\begin{aligned}
y_m=&\sum_{j=1}^{M}\sum_{i=1}^{2}{\bf{v}}_m\bar{\bf{h}}_{ji}\sqrt{P_{ji}}s_{ji}+\bar{\bf{n}}_m,\\
=&\underbrace{\sum_{i=1}^{2}{\bf{v}}_m\bar{\bf{h}}_{mi}\sqrt{P_{mi}}s_{mi}}_{\rm{Desired\;signal}}+\underbrace{\sum_{j\neq m}^{M}\sum_{i=1}^{2}{\bf{v}}_m\bar{\bf{h}}_{ji}\sqrt{P_{ji}}s_{ji}}_{\rm{Inter-beam\;interference}}+\underbrace{\bar{\bf{n}}_m}_{\rm{Noise}},
\end{aligned}
\end{eqnarray}
where $\bar{\bf{n}}_m={\bf{v}}_m{\bf{W}}{\bf{n}}_m$. 

\subsection{Proposed Advanced NOMA Decoding Scheme and EE Problem Formulation}
For the conventional NOMA decoding scheme, the SIC technique is only used within each user cluster/group~\cite{Kimy,Zeng_TVT_2019, Wang_CL_2018}.  In the uplink NOMA transmission, the BS owns global information on user signals. Therefore,  to further remove the inter-user interference, we propose a decoding scheme that only depends on the strength of the effective channel rather than the user cluster/group.  Specifically, the signal of the user owing the strongest effective channel is first detected. Before decoding User $(m,i)$'s signal, the recovered other users' (with the stronger effective channels) signals  are subtracted from the mixture signal. Accordingly, the SINR of User $(m,i)$ can be expressed~as
\begin{eqnarray}\label{sinr}
	\gamma_{mi}=\frac{|{\bf{v}}_m\bar{\bf{h}}_{mi}|^2{P_{mi}}}{{
	\sum_{jl\in\mathcal{{U}}(m,i)}|{\bf{v}}_m\bar{\bf{h}}_{jl}|^2{P_{jl}}+\delta^2}},
\end{eqnarray}
where $\mathcal{{U}}(m,i)$ denotes the users owning a  weaker effective channel than User $(m,i)$,
and its achievable rate can be written~as
\begin{eqnarray}
	R_{mi}({\bf{V}},{\bf{P}})=\log_2(1+\gamma_{mi}),
\end{eqnarray}
where ${\bf{V}}=[{\bf{v}}_1^T,\dots,{\bf{v}}_M^T]^T$ and ${\bf{P}}=[P_{mi}]_{M\times 2}$.

For User $(m,i)$, the total power consumption consists of circuit power consumption and transmit power, which can be expressed~as
\begin{eqnarray}
	P^{\rm{total}}_{mi}(P_{mi})=P_{\rm{c}}+\xi P_{mi},
\end{eqnarray}
where $P_{\rm{c}}$ denotes the circuit power consumption and $\xi>1$ denotes the inefficiency of the
power amplifier~\cite{Ng_TWC_2012}. Then, the EE of User $(m,i)$ is defined as
\begin{eqnarray}
	\eta_{mi}=\frac{R_{mi}({\bf{V}},{\bf{P}})}{P^{\rm{total}}_{mi}(P_{mi})}=\frac{\log_2(1+\gamma_{mi})}{P_{\rm{c}}+\xi P_{mi}}\;\rm{[bit/J/Hz]}.
\end{eqnarray}

In this paper, our objective is to maximize the minimum user EE subject to transmit power and rate requirements of users, which can be formulated as follows:
\begin{subequations}\label{OptA}
	\begin{align}
	\;\;\;\;\;\;\;\;\;\;&\underset{\left\{{\bf{V}, {P}}\right\}}{\rm{max}}\;\;\underset{mi}{\rm{min}}\;\;\eta_{mi}\label{OptA0}\\
	{\rm{s.t.}}\;\; &R_{mi}({\bf{V}},{\bf{P}})\geq R_{mi}^{\rm{min}},m\in \mathcal{M},i\in\{1,2\},\label{OptA1}\\
    &P_{mi}\leq P_{mi}^{\rm{max}},m\in \mathcal{M},i\in\{1,2\},\label{OptA2}\\
    &||{\bf{v}}_m{\bf{W}}||^2\leq 1,m\in \mathcal{M},\label{OptA3}
	\end{align}
\end{subequations}
where (\ref{OptA1}) denotes each user's minimum rate requirement, (\ref{OptA2}) is the maximum transmit power constraint for the users, and (\ref{OptA3}) denotes the normalized power constraint for the hybrid detection vector and analog beam at the BS.
\section{Proposed Solution}
One can observe that (\ref{OptA}) is a non-convex optimization problem, which is challenging to solve directly. In fact, we can classify (\ref{OptA}) as a generalized fractional programming~\cite{AG_1991}. To handle it, we transform (\ref{OptA0}) into a subtractive form, for which  an effective algorithm is proposed.

We define $\eta_{\rm{EE}}^{\ast}$ as the optimal EE of problem~(\ref{OptA}), and ${\bf{V}}^{\ast}$ and ${\bf{P}}^{\ast}$ are the corresponding optimal detection matrix and power allocation matrix, respectively. Then, we have 
\begin{eqnarray}
   \eta_{\rm{EE}}^{\ast}=\underset{\left\{{\bf{V}, {P}}\right\}\in{\bf{\Omega}}}{\rm{max}}\;\;\underset{mi}{\rm{min}} \frac{R_{mi}({\bf{V}},{\bf{P}})}{P^{\rm{total}}_{mi}\left(P_{mi}\right)}=\underset{mi}{\rm{min}} \frac{R_{mi}({\bf{V}}^{\ast},{\bf{P}}^{\ast})}{P^{\rm{total}}_{mi}\left(P_{mi}^{\ast}\right)},
\end{eqnarray}
where  ${{{\bf{\Omega}}}}$ is the set of all feasible solutions satisfying (\ref{OptA1})-(\ref{OptA3}). 
Regarding the optimal solution, we have the following theorem:
\begin{theorem}\label{theorem1}
The optimal solution (${\bf{V}}^{\ast}$, ${\bf{P}}^{\ast}$) of problem (\ref{OptA}) can be obtained if and only if:
\begin{eqnarray}\label{eq3}
\begin{aligned}
&\underset{\left\{{\bf{V}, {P}}\right\}\in{\bf{\Omega}}}{\rm{max}}\;\;\underset{mi}{\rm{min}} \;\;\left[{R_{mi}({\bf{V}},{\bf{P}})}-\eta_{\rm{EE}}^{\ast}{P^{\rm{total}}_{mi}\left(P_{mi}\right)}\right]\\
=&\underset{mi}{\rm{min}}\;\; \left[{R_{mi}({\bf{V}}^{\ast},{\bf{P}}^{\ast})}-\eta_{\rm{EE}}^{\ast}{P^{\rm{total}}_{mi}\left(P_{mi}^{\ast}\right)}\right]=0.
\end{aligned}
\end{eqnarray}	
\end{theorem}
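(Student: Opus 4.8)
The plan is to recognize this as the standard equivalence underlying generalized (max--min) fractional programming, and to prove both directions of the ``if and only if'' by exploiting the elementary correspondence, for each user $(m,i)$, between the ratio $R_{mi}/P^{\rm{total}}_{mi}$ and the sign of the subtractive term $R_{mi}-\eta_{\rm{EE}}^{\ast}P^{\rm{total}}_{mi}$. Since $P^{\rm{total}}_{mi}(P_{mi})=P_{\rm{c}}+\xi P_{mi}>0$ at every feasible point, dividing by it preserves the direction of any inequality, so for a fixed $(m,i)$ and any real $\eta$ one has $R_{mi}/P^{\rm{total}}_{mi}\geq\eta \Leftrightarrow R_{mi}-\eta P^{\rm{total}}_{mi}\geq 0$, and likewise with $\leq$ and $=$. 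This pointwise equivalence is precisely what lets me pass back and forth between the fractional and subtractive formulations after taking the minimum over users.

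For the sufficiency (``if'') direction, I would assume that the displayed max--min subtractive value equals zero and is attained at $({\bf{V}}^{\ast},{\bf{P}}^{\ast})$, and deduce optimality for (\ref{OptA}). The hypothesis splits into two facts: (i) for every feasible $({\bf{V}},{\bf{P}})\in{\bf{\Omega}}$ the quantity $\min_{mi}[R_{mi}-\eta_{\rm{EE}}^{\ast}P^{\rm{total}}_{mi}]\leq 0$, and (ii) at the candidate point this minimum equals $0$. By the correspondence above, (i) says that at every feasible point some user satisfies $R_{mi}/P^{\rm{total}}_{mi}\leq\eta_{\rm{EE}}^{\ast}$, hence $\min_{mi} R_{mi}/P^{\rm{total}}_{mi}\leq\eta_{\rm{EE}}^{\ast}$ throughout ${\bf{\Omega}}$, whereas (ii) forces $R_{mi}/P^{\rm{total}}_{mi}\geq\eta_{\rm{EE}}^{\ast}$ for all $(m,i)$ at $({\bf{V}}^{\ast},{\bf{P}}^{\ast})$, so its minimum ratio is at least $\eta_{\rm{EE}}^{\ast}$. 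Together these give $\max_{{\bf{\Omega}}}\min_{mi} R_{mi}/P^{\rm{total}}_{mi}=\eta_{\rm{EE}}^{\ast}$ attained at $({\bf{V}}^{\ast},{\bf{P}}^{\ast})$, i.e.\ this point solves (\ref{OptA}).

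For the necessity (``only if'') direction I would run the argument in reverse, starting from the definition $\eta_{\rm{EE}}^{\ast}=\min_{mi}R_{mi}({\bf{V}}^{\ast},{\bf{P}}^{\ast})/P^{\rm{total}}_{mi}=\max_{{\bf{\Omega}}}\min_{mi}R_{mi}/P^{\rm{total}}_{mi}$. Optimality gives $\min_{mi}R_{mi}/P^{\rm{total}}_{mi}\leq\eta_{\rm{EE}}^{\ast}$ at every feasible point, which by the correspondence yields $\min_{mi}[R_{mi}-\eta_{\rm{EE}}^{\ast}P^{\rm{total}}_{mi}]\leq 0$ on all of ${\bf{\Omega}}$; meanwhile the optimal minimum ratio being exactly $\eta_{\rm{EE}}^{\ast}$ makes every subtractive term nonnegative at $({\bf{V}}^{\ast},{\bf{P}}^{\ast})$, with equality for the binding (minimizing) user, so the subtractive minimum there equals exactly $0$. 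This is precisely the claimed identity.

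The main obstacle to watch is the max--min structure rather than a single ratio: unlike classical Dinkelbach, one must verify that the two ``$\min$ over $(m,i)$'' operations interact correctly with the transformation. The delicate point is establishing \emph{equality} (not merely nonnegativity) of the subtractive minimum at the optimum, which hinges on identifying the user that attains $\min_{mi}R_{mi}/P^{\rm{total}}_{mi}$ and observing that this same user makes its subtractive term vanish. Positivity of $P^{\rm{total}}_{mi}$ throughout ${\bf{\Omega}}$ is what guarantees the ratio/subtractive equivalence is an honest two-sided implication at every feasible point, and it is this equivalence that later justifies solving (\ref{OptA}) by the outer bi-section over $\eta_{\rm{EE}}$.
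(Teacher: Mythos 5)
Your proposal is correct and follows essentially the same route as the paper's own proof: both directions hinge on converting between the fractional statements $\min_{mi} R_{mi}/P^{\rm{total}}_{mi} \leq \eta_{\rm{EE}}^{\ast}$ (with equality at the optimum) and the subtractive statements $\min_{mi}\left[R_{mi}-\eta_{\rm{EE}}^{\ast}P^{\rm{total}}_{mi}\right]\leq 0$ (with equality at the optimum), exactly as in the paper's necessity and sufficiency arguments. Your write-up is in fact somewhat more careful than the paper's, since you make explicit the positivity of $P^{\rm{total}}_{mi}$ and the interaction of the pointwise ratio/subtractive equivalence with the minimum over users, which the paper compresses into ``rearranging yields.''
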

\begin{proof}
The above theorem should be proved from two aspects, necessity and sufficiency. First, we look at the necessity. Assume that $\left\{{\bf{V}, {P}}\right\}$  is any feasible solution of~(\ref{eq3}), we have
\begin{eqnarray}\label{proof1}
\underset{mi}{\rm{min}} \frac{R_{mi}({\bf{V}},{\bf{P}})}{P^{\rm{total}}_{mi}\left(P_{mi}\right)}\leq \eta_{\rm{EE}}^{\ast},\;\;\;\;
\underset{mi}{\rm{min}} \frac{R_{mi}({\bf{V}}^{\ast},{\bf{P}}^{\ast})}{P^{\rm{total}}_{mi}\left(P_{mi}^{\ast}\right)}= \eta_{\rm{EE}}^{\ast}.
\end{eqnarray}
According to (\ref{proof1}), we obtain 
\begin{subequations}
\begin{align}
\underset{mi}{\rm{min}} \left\{{R_{mi}({\bf{V}},{\bf{P}})}-\eta_{\rm{EE}}^{\ast}{P^{\rm{total}}_{mi}\left(P_{mi}\right)}\right\}\leq 0,\\
\underset{mi}{\rm{min}} \left\{{R_{mi}({\bf{V}}^{\ast},{\bf{P}}^{\ast})}-\eta_{\rm{EE}}^{\ast}{P^{\rm{total}}_{mi}\left(P_{mi}^{\ast}\right)}\right\}=0.
\end{align}
\end{subequations}
Therefore, $\{{\bf{V}}^{\ast},{\bf{P}}^{\ast}\}$ is also the optimal solution of~(\ref{eq3}).  

Next, we give the proof of sufficiency. Assume that $\{{\bf{V}},{\bf{P}}\}$ and $\{{\bf{V}}^{\ast},{\bf{P}}^{\ast}\}$ are, respectively, feasible and  optimal solution of ~(\ref{eq3}), we have 
\begin{subequations}\label{proof3}
\begin{align}
\underset{mi}{\rm{min}} \left\{{R_{mi}({\bf{V}},{\bf{P}})}-\eta_{\rm{EE}}^{\ast}{P^{\rm{total}}_{mi}\left(P_{mi}\right)}\right\}\leq 0,\\
\underset{mi}{\rm{min}} \left\{{R_{mi}({\bf{V}}^{\ast},{\bf{P}}^{\ast})}-\eta_{\rm{EE}}^{\ast}{P^{\rm{total}}_{mi}\left(P_{mi}^{\ast}\right)}\right\}=0.
\end{align}
\end{subequations}
Rearranging~(\ref{proof3}) yields 
\begin{eqnarray}\label{proof4}
\underset{mi}{\rm{min}} \frac{R_{mi}({\bf{V}},{\bf{P}})}{P^{\rm{total}}_{mi}\left(P_{mi}\right)}\leq \eta_{\rm{EE}}^{\ast},\;\;\;\;
\underset{mi}{\rm{min}} \frac{R_{mi}({\bf{V}}^{\ast},{\bf{P}}^{\ast})}{P^{\rm{total}}_{mi}\left(P_{mi}^{\ast}\right)}= \eta_{\rm{EE}}^{\ast}.
\end{eqnarray}
Therefore, $\{{\bf{V}}^{\ast},{\bf{P}}^{\ast}\}$ is also the optimal solution of~(\ref{OptA}).  
\end{proof}

Theorem~\ref{theorem1} demonstrates that the solutions of problem~(\ref{OptA}) can be obtained via solving~(\ref{eq3}). However, (\ref{eq3}) is still difficult to solve since we cannot obtain $\eta_{\rm{EE}}^{\ast}$ in advance. To this end, we define the following function:
\begin{eqnarray}
	{\mathcal{L}}(\eta_{\rm{EE}})=\underset{\left\{{\bf{V}, {P}}\right\}\in{\bf{\Omega}}}{\rm{max}}\;\;\underset{mi}{\rm{min}} \;\;\left[{R_{mi}({\bf{V}},{\bf{P}})}-\eta_{\rm{EE}}{P^{\rm{total}}_{mi}\left(P_{mi}\right)}\right],
\end{eqnarray}
and we have the following  theorem:
\begin{theorem}\label{theorem2}
	${\mathcal{L}}(\eta_{\rm{EE}})$ is a strictly monotonically decreasing function with $\eta_{\rm{EE}}$.
\end{theorem}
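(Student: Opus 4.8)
The plan is to prove the strict decrease directly from the definition of $\mathcal{L}(\cdot)$, exploiting the fact that the total power consumption is \emph{strictly positive}. First I would fix two arbitrary values $\eta_1<\eta_2$ and aim to establish $\mathcal{L}(\eta_1)>\mathcal{L}(\eta_2)$. The cornerstone is a pointwise observation: for every fixed feasible pair $\{\mathbf{V},\mathbf{P}\}\in\mathbf{\Omega}$ and every user index $(m,i)$, the bracketed term $R_{mi}(\mathbf{V},\mathbf{P})-\eta_{\rm{EE}}P^{\rm{total}}_{mi}(P_{mi})$ is an affine function of $\eta_{\rm{EE}}$ with slope $-P^{\rm{total}}_{mi}(P_{mi})=-(P_{\rm{c}}+\xi P_{mi})$. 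Since $P_{\rm{c}}>0$ and $\xi>1$, this slope is strictly negative, so each such term is \emph{strictly} decreasing in $\eta_{\rm{EE}}$.

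Next I would propagate this strictness through the inner minimum and the outer maximum. Let $\{\mathbf{V}^{\star},\mathbf{P}^{\star}\}$ attain the maximum defining $\mathcal{L}(\eta_2)$; such a maximizer exists because the objective is continuous and the feasible set $\mathbf{\Omega}$ is compact (constraints (\ref{OptA2}) and (\ref{OptA3}) bound $\mathbf{P}$ and $\mathbf{V}$, while (\ref{OptA1}) is closed). Evaluating the bracketed terms at this fixed pair but at the smaller argument $\eta_1$, the pointwise strict inequality gives $R_{mi}(\mathbf{V}^{\star},\mathbf{P}^{\star})-\eta_1 P^{\rm{total}}_{mi}(P^{\star}_{mi}) > R_{mi}(\mathbf{V}^{\star},\mathbf{P}^{\star})-\eta_2 P^{\rm{total}}_{mi}(P^{\star}_{mi})$ for \emph{every} $(m,i)$. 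Because the user set is finite, a termwise strict inequality is preserved under the minimum, so $\min_{mi}[R_{mi}(\mathbf{V}^{\star},\mathbf{P}^{\star})-\eta_1 P^{\rm{total}}_{mi}(P^{\star}_{mi})] > \mathcal{L}(\eta_2)$.

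Finally I would close the argument with the maximization step. Since $\{\mathbf{V}^{\star},\mathbf{P}^{\star}\}$ is merely one feasible point for the problem defining $\mathcal{L}(\eta_1)$, we have $\mathcal{L}(\eta_1)\ge \min_{mi}[R_{mi}(\mathbf{V}^{\star},\mathbf{P}^{\star})-\eta_1 P^{\rm{total}}_{mi}(P^{\star}_{mi})]$, and chaining this with the previous strict inequality yields $\mathcal{L}(\eta_1)>\mathcal{L}(\eta_2)$, as desired.

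The affine-slope step and the two envelope operations are routine; the part that deserves the most care is justifying that strictness survives both the inner $\min$ and the outer $\max$. I expect the main obstacle to be the strictness rather than the weak monotonicity: non-strict decrease follows trivially, but the strict version relies essentially on $P^{\rm{total}}_{mi}\ge P_{\rm{c}}>0$ (so no term can have zero slope) together with the finiteness of the user set (so the minimizing index cannot erase the strict gap). I would also make explicit that the maximizer in $\mathcal{L}(\eta_2)$ is attained, since the clean one-line bound presupposes an actual optimal pair; if attainment were in doubt one would instead run an $\varepsilon$-optimal argument, but the compactness of $\mathbf{\Omega}$ makes this unnecessary.
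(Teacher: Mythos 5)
Your proof is correct and follows essentially the same argument as the paper's: anchor at the optimizer corresponding to the larger $\eta_{\rm{EE}}$, use the strictly negative slope $-P^{\rm{total}}_{mi}(P_{mi})<0$ to get a strict pointwise (hence min-preserved) inequality, and finish with the feasibility bound at the smaller $\eta_{\rm{EE}}$. The only difference is that you make explicit two points the paper leaves implicit — attainment of the maximizer via compactness of $\bf{\Omega}$ and preservation of strictness under the finite minimum — which is added rigor, not a different route.
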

\begin{proof}
	For any $\eta_{\rm{EE}}^1$ and $\eta_{\rm{EE}}^2$, we assume $\eta_{\rm{EE}}^1>\eta_{\rm{EE}}^2$ and (${\bf{V}}^1$, ${\bf{P}}^1$), (${\bf{V}}^2$, ${\bf{P}}^2$) as the corresponding optimal solutions. Then, we have 
\begin{equation}
\begin{aligned}
{\mathcal{L}}(\eta_{\rm{EE}}^1)=&\underset{\left\{{\bf{V}, {P}}\right\}\in{\bf{\Omega}}}{\rm{max}}\;\;\underset{mi}{\rm{min}} \;\;\left[{R_{mi}({\bf{V}},{\bf{P}})}-\eta_{\rm{EE}}^1{P^{\rm{total}}_{mi}\left(P_{mi}\right)}\right]\\
=&\underset{mi}{\rm{min}} \;\;\left[{R_{mi}({\bf{V}}^1,{\bf{P}}^1)}-\eta_{\rm{EE}}^1{P^{\rm{total}}_{mi}\left(P_{mi}^1\right)}\right]\\
<&\underset{mi}{\rm{min}} \;\;\left[{R_{mi}({\bf{V}}^1,{\bf{P}}^1)}-\eta_{\rm{EE}}^2{P^{\rm{total}}_{mi}\left(P_{mi}^1\right)}\right]\\
\leq&\underset{mi}{\rm{min}} \;\;\left[{R_{mi}({\bf{V}}^2,{\bf{P}}^2)}-\eta_{\rm{EE}}^2{P^{\rm{total}}_{mi}\left(P_{mi}^2\right)}\right]\\
=&{\mathcal{L}}(\eta_{\rm{EE}}^2),
\end{aligned}
\end{equation}
and complete the proof.
\end{proof}

Meanwhile, for a realistic system, we have ${\mathcal{L}}(\eta_{\rm{EE}})\geq 0$ with  $\eta_{\rm{EE}}=0$ and ${\mathcal{L}}(\eta_{\rm{EE}})<0$ with sufficiently large $\eta_{\rm{EE}}$. Consequently, we can use the classical bi-section method to solve ${\mathcal{L}}(\eta_{\rm{EE}})= 0$ and obtain~$\eta_{\rm{EE}}^{\ast}$, which is summarized as Algorithm~\ref{algorithm1}. 

For a given $\eta_{\rm{EE}}'$, we need to solve the following optimization problem to obtain ${\mathcal{L}}(\eta_{\rm{EE}}')$
\begin{subequations}\label{OptB}
	\begin{align}
	\;\;\;\;\;\;\;\;\;\;&\underset{\left\{{\bf{V}, {P}}\right\}}{\rm{max}}\;\;\underset{mi}{\rm{min}}\;\;{R_{mi}({\bf{V}},{\bf{P}})}-\eta_{\rm{EE}}'{P^{\rm{total}}_{mi}\left(P_{mi}\right)}\label{OptB0}\\
{\rm{s.t.}}\;\; 	&{\rm{(\ref{OptA1})-(\ref{OptA3})}}.
	\end{align}
\end{subequations}
\begin{algorithm}[t]
	{\caption{The Bi-section-Based EE Resource Allocation Algorithm.}
		\label{algorithm1}
		{\bf{Initialize}} $\eta_{\rm{EE}}^{\rm{s}}$, $\eta_{\rm{EE}}^{\rm{b}}\gg 0$ with ${\mathcal{L}}(\eta_{\rm{EE}}^{\rm{s}})\geq0$ and ${\mathcal{L}}(\eta_{\rm{EE}}^{\rm{b}})<0$, a small constant $\epsilon$.\\
		\Repeat{$|{\mathcal{L}}(\eta_{\rm{EE}}')|<\epsilon$}{Update $\eta_{\rm{EE}}'\leftarrow(\eta_{\rm{EE}}^{\rm{s}}+\eta_{\rm{EE}}^{\rm{b}})/2$,\\
			Solve problem (\ref{OptB}) and obtain ${\mathcal{L}}(\eta_{\rm{EE}}')$,\\
			$\eta_{\rm{EE}}^{\rm{s}}\leftarrow \eta_{\rm{EE}}'$ if ${\mathcal{L}}(\eta_{\rm{EE}}')>0$, else $\eta_{\rm{EE}}^{\rm{b}}\leftarrow \eta_{\rm{EE}}'$.}
		Obtain the optimal $\eta_{\rm{EE}}^{\ast}=\eta_{\rm{EE}}'$.
		
	}
\end{algorithm}
Problem (\ref{OptB}) is still difficult to solve due to the non-smooth objective function (\ref{OptB0}) and non-convex constraints (\ref{OptA1}). Next, we introduce an auxiliary variable $z$ and reformulate (\ref{OptB})~as 
\setlength{\mathindent}{0cm}
\begin{subequations}\label{OptC}
	\begin{align}
   &\;\;\;\;\;\;\;\;\;\underset{\left\{{\bf{V}, {P}},z\right\}}{\rm{max}}\;\;z\label{OptC0}\\
	&{\rm{s.t.}}\;\;{R_{mi}({\bf{V}},{\bf{P}})}-\eta_{\rm{EE}}'{P^{\rm{total}}_{mi}\left(P_{mi}\right)}\geq z, m\in \mathcal{M},i\in\{1,2\},\label{OptC1}\\ 
	 &\;\;\;\;\;{\rm{(\ref{OptA1})-(\ref{OptA3})}}.\label{OptC2}
	\end{align}
\end{subequations}

For problem (\ref{OptC}), we need to optimize three variables, i.e., $\left\{{\bf{V}, {P}},z\right\}$. One can observe from (\ref{sinr}) that $\bf{V}$ and $\bf{P}$ are coupled, and simultaneously optimizing them is challenging. Therefore, we propose an alternatively iterative optimization algorithm. Specifically, we first fix ${\bf{{P}}}$ and optimize the detection matrix ${{\bf{V}}}$ and $z$. Then, we optimize the power ${\bf{{P}}}$ and $z$ based on the obtained ${{\bf{V}}}$ at the previous iteration.
\subsection{Optimizing ${{\bf{V}}}$ and $z$ under Fixed ${\bf{{P}}}$}
Under a given feasible $\tilde{\bf{{P}}}$, we need to solve the following optimization problem:
 \begin{subequations}\label{OptD}
 	\begin{align}
 	&\;\;\;\;\;\;\;\;\;\underset{\left\{{\bf{V}},z\right\}}{\rm{max}}\;\;z\label{OptD0}\\
 	&{\rm{s.t.}}\;\;{R_{mi}({\bf{V}},\tilde{\bf{P}})}\geq z+\eta_{\rm{EE}}'{P^{\rm{total}}_{mi}\left(\tilde{P}_{mi}\right)}, m\in \mathcal{M},i\in\{1,2\},\label{OptD1}\\ 
 	&\;\;\;\;\;\;{R_{mi}({\bf{V}},\tilde{\bf{P}})}\geq R_{mi}^{\rm{min}}, m\in \mathcal{M},i\in\{1,2\},\label{OptD11}\\
 	&\;\;\;\;\;\;{\rm{(\ref{OptA3})}},\label{OptD2}
 	\end{align}
 \end{subequations}
where ${R_{mi}({\bf{V}},\tilde{\bf{P}})}=\log_2\left(1+\frac{|{\bf{v}}_m\bar{\bf{h}}_{mi}|^2{\tilde{P}_{mi}}}{{
		\sum_{jl\in\mathcal{U}(m,i)}|{\bf{v}}_m\bar{\bf{h}}_{jl}|^2\tilde{P}_{jl}+\delta^2}}\right)$. {(\ref{OptD}) is a non-convex optimization problem due to the non-convex constraint (\ref{OptD1}) and (\ref{OptD11}), and successive convex approximation technique is usually used to transform the non-convex constraint into the convex one~\cite{Rahmati_ICC_2019,Mao_2018_EUR}.} On this basis, to transform them into convex constraints, we introduce an auxiliary variable matrix ${\bf{T}}=[t_{mi}]_{M\times 2}$ and obtain
	\begin{subequations}\label{OptE}
		\begin{align}
		&\;\;\;\;\;\;\;\;\;\underset{\left\{{\bf{V}},\;{\bf{T}},\;z\right\}}{\rm{max}}\;\;z\label{OptE0}\\
		&{\rm{s.t.}}\;\;\log_2\left(1+t_{mi}\right)\geq z+\eta_{\rm{EE}}'{P^{\rm{total}}_{mi}\left(\tilde{P}_{mi}\right)}, m\in \mathcal{M},i\in\{1,2\},\label{OptE1}\\
		&\;\;\;\;\;\;\log_2\left(1+t_{mi}\right)\geq R_{mi}^{\rm{min}}, m\in \mathcal{M},i\in\{1,2\},\label{OptE11}\\
		&\;\;\;\;\;\frac{|{\bf{v}}_m\bar{\bf{h}}_{mi}|^2{\tilde{P}_{mi}}}{{
				\sum_{jl\in\mathcal{U}(m,i)}|{\bf{v}}_m\bar{\bf{h}}_{jl}|^2\tilde{P}_{jl}+\delta^2}}\geq t_{mi},m\in \mathcal{M},i\in\{1,2\},\label{OptE2} \\
		&\;\;\;\;\;\rm{(\ref{OptA3})}.\label{OptE3}
		\end{align}
	\end{subequations}

It is clear that the only non-convex constraint is (\ref{OptE2}). Next, we divide (\ref{OptE2}) into two constraints by bringing an auxiliary variable matrix  ${\bf{Q}}=[q_{mi}]_{M\times 2}$ as follows
\begin{eqnarray}
	&{|{\bf{v}}_m\bar{\bf{h}}_{mi}|^2{\tilde{P}_{mi}}}\geq t_{mi}q_{mi},\\
	&{{\sum_{jl\in\mathcal{U}(m,i)}|{\bf{v}}_m\bar{\bf{h}}_{jl}|^2\tilde{P}_{jl}+\delta^2}}\leq q_{mi}.\label{eq20}
\end{eqnarray}
Furthermore, we define $f(t_{mi},q_{mi})\triangleq t_{mi}q_{mi}$ and $u({\bf{v}}_m)\triangleq{{\bf{v}}_m\bar{\bf{H}}_{mi}{\bf{v}}_m^H{\tilde{P}_{mi}}}$, where $\bar{\bf{H}}_{mi}=\bar{\bf{h}}_{mi}\bar{\bf{h}}_{mi}^H$. After that, we linearize $u({\bf{v}}_m)$ with ${\hat{\bf{v}}}_m$, which can be expressed as 
\begin{eqnarray}
\begin{aligned}
\hat{u}({\bf{v}}_m,{\hat{\bf{v}}}_m)&\triangleq{\hat{\bf{v}}}_m\bar{\bf{H}}_{mi}{\hat{\bf{v}}}_m^H{\tilde{P}_{mi}}+\left<\nabla_{{{\bf{v}}}_m} u(\hat{\bf{v}}_m), {\bf{v}}_m-\hat{\bf{v}}_m\right>\\
&\leq u({\bf{v}}_m),
\end{aligned}
\end{eqnarray}
where $\nabla_{{{\bf{v}}}_m} u(\hat{\bf{v}}_m)$ is the derivative of $u({\bf{v}}_m)$ at $\hat{\bf{v}}_m$ and $\left<{\bf{a}},{\bf{b}}\right>\triangleq 2{\rm{Re}}\{{\bf{a}}{\bf{b}}^H\}$. In addition, we define the function $\hat{f}(t_{mi},q_{mi},\hat{t}_{mi},\hat{q}_{mi})\triangleq \frac{\hat{t}_{mi}}{2\hat{q}_{mi}}{q}_{mi}^2+\frac{\hat{q}_{mi}}{2\hat{t}_{mi}}{t}_{mi}^2$ and obtain
\begin{eqnarray}
\begin{aligned}
	&\hat{f}(t_{mi},q_{mi},\hat{t}_{mi},\hat{q}_{mi})-f(t_{mi},q_{mi})\\
	=&\frac{\hat{t}_{mi}}{2\hat{q}_{mi}}{q}_{mi}^2+\frac{\hat{q}_{mi}}{2\hat{t}_{mi}}{t}_{mi}^2-t_{mi}q_{mi}\\
	=&\frac{\hat{t}_{mi}}{2\hat{q}_{mi}}\left({q}_{mi}^2+\frac{\hat{q}_{mi}^2}{\hat{t}_{mi}^2}{t}_{mi}^2-\frac{2\hat{q}_{mi}}{\hat{t}_{mi}}t_{mi}q_{mi}\right)\\
	=&\frac{\hat{t}_{mi}}{2\hat{q}_{mi}}\left({q}_{mi}-\frac{\hat{q}_{mi}}{\hat{t}_{mi}}{t}_{mi}\right)^2\\
	\geq&0.
\end{aligned}
\end{eqnarray}

\begin{algorithm}[t]
	{\caption{The Detection Matrix Iterative Algorithm.}
		\label{algorithm2}
		{\bf{Initialize}} $\left\{\hat{\bf{V}},\;\hat{\bf{T}},\;\hat{\bf{Q}}\right\}$ and $\tilde{\bf{P}}$.\\
		\Repeat{{\rm{convergence}}}{
			Solve the optimization problem (\ref{OptH}) and obtain the optimal $\left\{{\bf{V}}^{\ast},\;{\bf{T}}^{\ast},\;{\bf{Q}}^{\ast},\;z^{\ast}\right\}$.\\
			Update $\left\{\hat{\bf{V}},\;\hat{\bf{T}},\;\hat{\bf{Q}}\right\}\leftarrow \left\{{\bf{V}}^{\ast},\;{\bf{T}}^{\ast},\;{\bf{Q}}^{\ast}\right\}$.}
	}
	Obtain the optimal $\left\{{\bf{V}}^{\ast},\;{\bf{T}}^{\ast},\;{\bf{Q}}^{\ast},\;z^{\ast}\right\}$.
\end{algorithm} 

Therefore, we have $f(t_{mi},q_{mi})\leq \hat{f}(t_{mi},q_{mi},\hat{t}_{mi},\hat{q}_{mi})$. Based on the above analysis, we can transform (\ref{OptE}) into the following optimization problem:
\begin{subequations}\label{OptH}
	\begin{align}
	&\;\;\;\;\;\;\;\;\;\underset{\left\{{\bf{V}},\;{\bf{T}},\;{\bf{Q}},\;z\right\}}{\rm{max}}\;\;z\label{OptH0}\\
	&{\rm{s.t.}}\;\;\log_2\left(1+t_{mi}\right)\geq z+\eta_{\rm{EE}}'{P^{\rm{total}}_{mi}\left(\tilde{P}_{mi}\right)}, m\in \mathcal{M},i\in\{1,2\},\label{OptH1}\\
	&\;\;\;\;\;\;t_{mi}\geq 2^{R_{mi}^{\rm{min}}}-1, m\in \mathcal{M},i\in\{1,2\},\label{OptH11}\\
	&\;\;\;\;\;{{\sum_{jl\in\mathcal{U}(j,i)}|{\bf{v}}_m\bar{\bf{h}}_{jl}|^2\tilde{P}_{jl}+\delta^2}}\leq q_{mi},m\in \mathcal{M},i\in\{1,2\},\label{OptH2} \\
	&\;\;\;\;\;\hat{u}({\bf{v}}_m,{\hat{\bf{v}}}_m)\geq\hat{f}(t_{mi},q_{mi},\hat{t}_{mi},\hat{q}_{mi}),m\in \mathcal{M},i\in\{1,2\},\label{OptH3}\\
	&\;\;\;\;\;{\rm{(\ref{OptA3})}}.\label{OptH4}
	\end{align}
\end{subequations}

{The objective function $z$ is linear.  Constraint (\ref{OptH1}) only includes a concave function  $\log_2\left(1+t_{mi}\right)$ and a linear function $z$, and thus, it is a convex constraint~\cite{CVX_2004}. In addition, (\ref{OptH11}) is a linear constraint, and (\ref{OptH2}), (\ref{OptH3}), and (\ref{OptH4}) are convex second-order cone constraints. Therefore,  (\ref{OptH}) is a convex optimization problem, which can be solved by numerical convex program solvers, e.g., interior-point method~\cite{CVX_2004}.} To obtain the solution of problem~(\ref{OptD}), we need to iteratively solve~(\ref{OptH}). Specifically, initialized from a given  feasible solution $\left\{\hat{\bf{V}},\;\hat{\bf{T}},\;\hat{\bf{Q}}\right\}$, the optimal $\left\{{\bf{V}}^{\ast},\;{\bf{T}}^{\ast},\;{\bf{Q}}^{\ast},\;z^{\ast}\right\}$ is obtained by solving (\ref{OptH}). Then, we replace $\left\{\hat{\bf{V}},\;\hat{\bf{T}},\;\hat{\bf{Q}}\right\}$ with $\left\{{\bf{V}}^{\ast},\;{\bf{T}}^{\ast},\;{\bf{Q}}^{\ast}\right\}$ and solve (\ref{OptH}) again. The above procedure is carried out until convergence. In addition, since the optimal solution $\left\{{\bf{V}}^{\ast},\;{\bf{T}}^{\ast},\;{\bf{Q}}^{\ast},\;z^{\ast}\right\}$ are obtained at each iteration, iteratively updating these variables will increase or maintain the value of the objective function~(\ref{OptD0}). Therefore, the obtained solution is at least a local optimal. We summarize the above scheme in Algorithm~\ref{algorithm2}.

 \begin{algorithm}[t]
	{\caption{The Power Iterative Algorithm.}
		\label{algorithm3}
		{\bf{Initialize}} ${\hat{\bf{P}}}$, ${\bf{{V}}}^{\ast}$.\\
		\Repeat{{\rm{convergence}}}{
			Solve the optimization problem (\ref{OptM}) and obtain the optimal $\{{{\bf{P}}}^{\ast},z^{\ast}\}$.\\
			Update ${{\hat{\bf{P}}}}\leftarrow{{\bf{P}}}^{\ast}$.}
		Obtain the optimal $\{{{\bf{P}}}^{\ast},z^{\ast}\}$.
	}
\end{algorithm} 

\subsection{Optimize ${{\bf{P}}}$ and $z$ under Fixed ${\bf{{V}}}$}
According to the obtained  ${\bf{{V}}}^{\ast}$ in Section IV. A, (\ref{OptC}) can be simplified as
 \begin{subequations}\label{OptG}
	\begin{align}
	&\;\;\;\;\;\;\;\;\;\underset{\left\{{\bf{P}},z\right\}}{\rm{max}}\;\;z\label{OptG0}\\
	&{\rm{s.t.}}\;\;{R_{mi}({\bf{{V}}}^{\ast},{\bf{P}})}\geq z+\eta_{\rm{EE}}'{P^{\rm{total}}_{mi}\left({P}_{mi}\right)}, m\in \mathcal{M},i\in\{1,2\},\label{OptG1}\\
	&\;\;\;\;\;\;{R_{mi}({\bf{{V}}}^{\ast},{\bf{P}})}\geq R_{mi}^{\rm{min}}, m\in \mathcal{M},i\in\{1,2\},\label{OptG11}\\ 
	&\;\;\;\;\;P_{mi}\leq P_{mi}^{\rm{max}},m\in \mathcal{M},i\in\{1,2\},\label{OptG2}
	\end{align}
\end{subequations}
where ${R_{mi}({\bf{{V}}}^{\ast},{\bf{P}})}=\log_2\left(1+\frac{|{\bf{{v}}}^{\ast}_m\bar{\bf{h}}_{mi}|^2{{P}_{mi}}}{{
		\sum_{jl\in\mathcal{U}(m,i)}|{\bf{{v}}}^{\ast}_m\bar{\bf{h}}_{jl}|^2{P}_{jl}+\delta^2}}\right)$. We rewrite ${R_{mi}({\bf{{V}}}^{\ast},{\bf{P}})}$ as
\begin{eqnarray}
\begin{aligned}
{R_{mi}({\bf{{V}}}^{\ast},{\bf{P}})}=&\log_2\left(\frac{{
		\sum_{jl\in\mathcal{U}(m,i)}|{\bf{{v}}}^{\ast}_m\bar{\bf{h}}_{jl}|^2{P}_{jl}}+|{\bf{{v}}}^{\ast}_m\bar{\bf{h}}_{mi}|^2{{P}_{mi}}+\delta^2}{{
		\sum_{jl\in\mathcal{U}(m,i)}|{\bf{{v}}}^{\ast}_m\bar{\bf{h}}_{jl}|^2{P}_{jl}+\delta^2}}\right)\\
	=&{R_{mi}^1({\bf{{V}}}^{\ast},{\bf{P}})}-{R_{mi}^2({\bf{{V}}}^{\ast},{\bf{P}})},
\end{aligned}
\end{eqnarray}
where ${R_{mi}^1({\bf{{V}}}^{\ast},{\bf{P}})}=\log_2\left({{
		\sum_{jl\in\mathcal{U}(m,i)}|{\bf{{v}}}^{\ast}_m\bar{\bf{h}}_{jl}|^2{P}_{jl}}\!+\!|{\bf{{v}}}^{\ast}_m\bar{\bf{h}}_{mi}|^2{{P}_{mi}}\!+\!\delta^2}\right)$, and ${R_{mi}^2({\bf{{V}}}^{\ast},{\bf{P}})}=\log_2\left({{\sum_{jl\in\mathcal{U}(m,i)}|{\bf{{v}}}^{\ast}_m\bar{\bf{h}}_{jl}|^2{P}_{jl}+\delta^2}}\right)$. To this end, constraint~(\ref{OptG1}) can be expressed as
	\begin{eqnarray}\label{DC}
	{R_{mi}^1({\bf{{V}}}^{\ast},{\bf{P}})}-{R_{mi}^2({\bf{{V}}}^{\ast},{\bf{P}})}\geq z+\eta_{\rm{EE}}'{P^{\rm{total}}_{mi}\left({P}_{mi}\right)}.
	\end{eqnarray}

Since $R_{mi}^1({\bf{{V}}}^{\ast},{\bf{P}})$ and $R_{mi}^2({\bf{{V}}}^{\ast},{\bf{P}})$ are both convex with ${\bf{P}}$, (\ref{DC}) is a difference of convex (DC) constraint~\cite{DC_1999}, and (\ref{OptG}) is a DC programming problem. In general, constrained concave convex procedure (CCCP) is  used to solve the DC program~\cite{KM_2005}. The key idea of CCCP is to transform the non-convex  set into a convex set, and then, iteratively solve the formulated convex optimization problem. The iteration is carried out until the result converges. Based on this, we first transform (\ref{DC}) into a convex constraint by the first-order Taylor approximation, which is given by 
\begin{eqnarray}
	{R_{mi}^2({\bf{{V}}}^{\ast},{\bf{P}},\hat{\bf{P}})}={R_{mi}^2({\bf{{V}}}^{\ast},\hat{\bf{P}})}+\nabla{R_{mi}^2({\bf{{V}}}^{\ast},\hat{\bf{P}})}({\bf{P}}-\hat{\bf{P}}),
\end{eqnarray}   
where ${R_{mi}^2({\bf{{V}}}^{\ast},\hat{\bf{P}})}=\log_2\left({{\sum_{jl\in\mathcal{U}(m,i)}|{\bf{{v}}}^{\ast}_m\bar{\bf{h}}_{jl}|^2\hat{P}_{jl}+\delta^2}}\right)$ and 
\begin{eqnarray}
	\nabla{R_{mi}^2({\bf{{V}}}^{\ast},\hat{\bf{P}})}({\bf{P}}-\hat{\bf{P}})=\frac{{\sum_{jl\in\mathcal{U}(m,i)}|{\bf{{v}}}^{\ast}_m\bar{\bf{h}}_{jl}|^2}(P_{jl}-\hat{P}_{jl})}{\left({{\sum_{jl\in\mathcal{U}(m,i)}|{\bf{{v}}}^{\ast}_m\bar{\bf{h}}_{jl}|^2\hat{P}_{jl}+\delta^2}}\right)\ln 2}.
\end{eqnarray}

Finally, we transform (\ref{OptG}) into 
 \begin{subequations}\label{OptM}
	\begin{align}
	&\;\;\;\;\;\;\;\;\;\underset{\left\{{\bf{P}},z\right\}}{\rm{max}}\;\;z\label{OptM0}\\
	&{\rm{s.t.}}\;\;{R_{mi}^1({\bf{{V}}}^{\ast},{\bf{P}})}-{R_{mi}^2({\bf{{V}}}^{\ast},{\bf{P}},\hat{\bf{P}})}\geq z+\eta_{\rm{EE}}'{P^{\rm{total}}_{mi}\left({P}_{mi}\right)},\label{OptM1}\\ 
	&\;\;\;\;\;|{\bf{{v}}}^{\ast}_m\bar{\bf{h}}_{mi}|^2{{P}_{mi}}\geq 2^{R_{mi}^{\rm{min}}}\Sigma_{mi},m\in \mathcal{M},i\in\{1,2\},\label{OptBM22}\\
	&\;\;\;\;\;P_{mi}\leq P_{mi}^{\rm{max}},m\in \mathcal{M},i\in\{1,2\},\label{OptBM2}
	\end{align}
\end{subequations}
where $\Sigma_{mi}={\sum_{jl\in\mathcal{U}(m,i)}|{\bf{{v}}}^{\ast}_m\bar{\bf{h}}_{jl}|^2{P}_{jl}+\delta^2}$.  Problem (\ref{OptM}) is a standard convex optimization problem and can be solved by the interior-point method. Likewise, we need to iteratively solve (\ref{OptM}) to obtain the solution of (\ref{OptG}). Specifically, starting with an initial feasible $\hat{\bf{P}}$, the optimal ${\bf{P}}^{\ast}$ can be obtained via solving (\ref{OptM}). Then, we update $\hat{\bf{P}}$ with ${\bf{P}}^{\ast}$ and resolve (\ref{OptM}). The above iteration is carried out until convergence. We summarize the above scheme in Algorithm~\ref{algorithm3}.

\begin{figure}[t]
	\begin{center}
		\includegraphics[width=9cm,height=7cm]{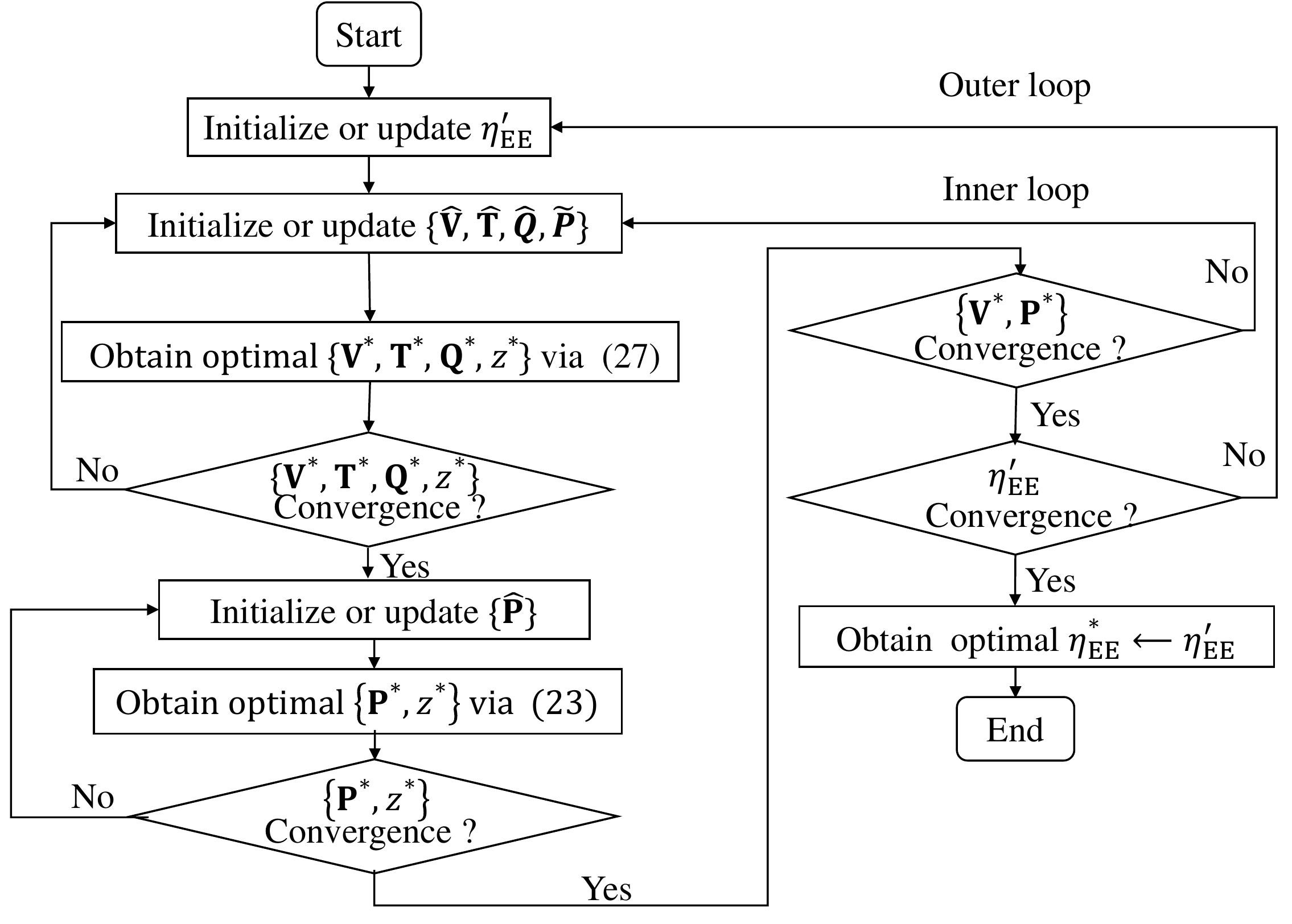}
		\caption{The flow chart of the proposed two-loop iterative algorithm.}
		\label{figure3}
	\end{center}
\end{figure}

\begin{remark}
We show the algorithm flow chart for solving the original problem~(\ref{OptA}) in Fig.~\ref{figure3}, which includes the inner and outer loops. In the inner loop, we need to solve the optimization problem~(\ref{OptB}), and an alternatively iterative algorithm is proposed. Since (\ref{OptH}) and (\ref{OptM}) are standard convex optimization problems, the obtained solution $\left\{{\bf{V}}^{\ast},\;{\bf{T}}^{\ast},\;{\bf{Q}}^{\ast},\;{\bf{P}}^{\ast},\;z^{\ast}\right\}$ is optimal at each iteration. Therefore, iteratively updating those variables will always increase or at least maintain the objective value of (\ref{OptB})~\cite{Zhang_TVT_2016}. On the other hand, the objective value of (\ref{OptB}) has an upper bound due to the limited transmit power. Thus, the proposed inner-loop iterative algorithm will converge to a stationary and at least a local optimal solution for problem~(\ref{OptB}).  Next, the outer loop is needed to solve problem~(\ref{OptA}), and the bi-section-based iterative algorithm is adopted to obtain the optimal solution of problem~(\ref{OptA}).  
\end{remark}

\subsection{Analysis of Computational Complexity}
Now, we analyze the computational complexity of the proposed two-loop iterative algorithm for solving problem (\ref{OptA}). For the inner iteration, we need to iteratively solve (\ref{OptH}) and (\ref{OptM}). The computational complexity of solving (\ref{OptH}) is $\mathcal{O}([6M+1]^{3.5})$~\cite{Huang_SPL_2015}, where $6M+1$ denotes the number of variables. In addition, the computational complexity of the CCCP-based iterative algorithm to solve~(\ref{OptM}) is  $\mathcal{O}(\log((6M)/\varepsilon^{\rm{o}}\zeta)/\log(\xi))$ at each iteration, where $6M$ is the total number of constraints in problem (\ref{OptM}). $\varepsilon^{\rm{o}}$ is the initial point for approximating the accuracy, $\zeta(\zeta\!\in\!(0,1])$ is the stopping criterion, and $\xi$ is used for updating the accuracy~\cite{Mokari_TVT_2016}. We assume that $L$ inner iterations are needed, and the computational complexity of solving (\ref{OptC}) is $\mathcal{O}(L([6M+1]^{3.5}+\log((6M)/\varepsilon^{\rm{o}}\zeta)/\log(\xi)))$. Since the computational complexity of the outer iteration is $\mathcal{O}(\log_2(1/\epsilon))$, the total computational complexity of our proposed algorithm is $\mathcal{O}(\log_2(1/\epsilon)L([6M+1]^{3.5}+\log((6M)/\varepsilon^{\rm{o}}\zeta)/\log(\xi)))$.

\section{ZF-Based Low Complexity Algorithm}
In Section III, we proposed an alternatively iterative ${\bf{V}}$ and ${\bf{P}}$ algorithm for solving (\ref{OptC}). To decrease the computational complexity,  in this section, we develop a ZF-based algorithm.  First, we arrange the two users in each beam group following a descending order based on their effective channel strengths, i.e., $|\bar{\bf{h}}_{m1}|\geq|\bar{\bf{h}}_{m2}|$. Similar to~\cite{Kimy}, we generate the detection vectors based on the effective channel of the strong users.  Thus, we define ${\bf{H}}=[{\bar{\bf{h}}}_{1,1},\dots,{\bar{\bf{h}}}_{M,1}]$, and ${\bf{V}}=({\bf{H}}^H{\bf{H}})^{-1}{\bf{H}}^H$. The detection vector ${\bf{v}}_{m}$ can be expressed as
\begin{eqnarray}
	{\bf{v}}_{m}=\frac{{\bf{V}}(m)}{||{\bf{V}}(m){\bf{W}}\|}, m\in\mathcal{M},
\end{eqnarray}
where ${\bf{W}}(m)$ denote the $m$th row of ${\bf{W}}$. 

{After designing the detection matrix {\bf{V}}, the interference among strong users of all clusters can be canceled, and the detected signal of the $m$-th beam at the BS can be expressed as}
\begin{eqnarray}\label{ZF1}
y_{m}=\sum_{i=1}^{2}{\bf{v}}_m\bar{\bf{h}}_{mi}\sqrt{P_{mi}}s_{mi}+\sum\nolimits_{j=1}^{M}{\bf{v}}_m\bar{\bf{h}}_{j2}\sqrt{P_{j2}}s_{j2}+\bar{\bf{n}}_{mi}.
\end{eqnarray}

\begin{figure}[t]
	\begin{center}
		\includegraphics[width=5cm,height=7cm]{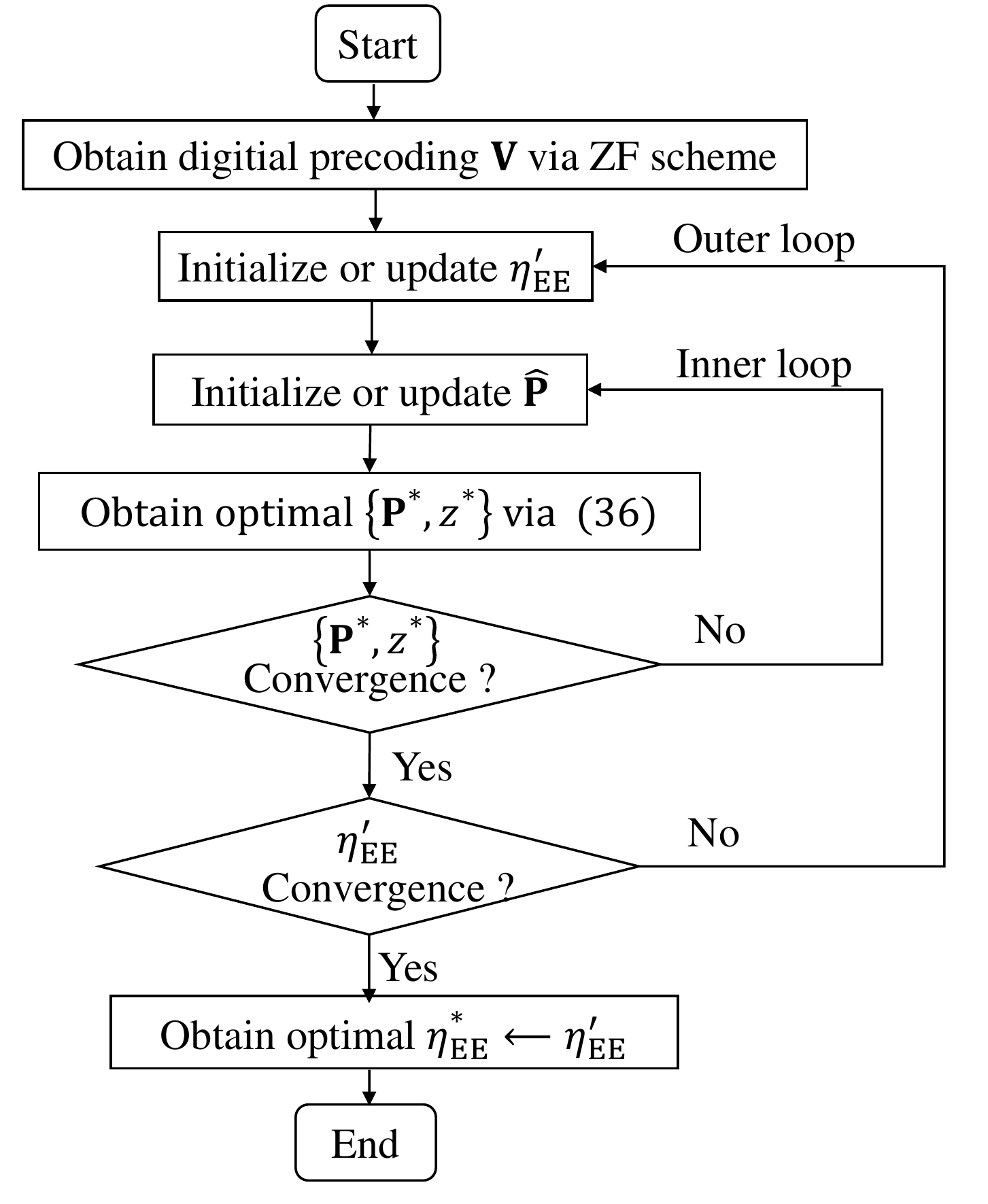}
		\caption{The flow chart of the proposed ZF-based two-loop iterative  algorithm.}
		\label{figure4}
	\end{center}
\end{figure}
In addition, we adopt the same decoding order as the one used in Section III. Similarly, we define ${\hat{\mathcal{U}}}_2(m,i)$ as the second user set in all clusters that owns weaker effective channel than User $(m,i)$. As a result, the SINR of User $(m,i)$ can be calculated as
\begin{eqnarray}\label{ZF}
	\gamma_{mi}^{\rm{zf}}=\frac{|{\bf{v}}_m\bar{\bf{h}}_{mi}|^2{P_{mi}}}{{
			\sum_{jl\in{\hat{\mathcal{{U}}}}_2(m,i)}|{\bf{v}}_m\bar{\bf{h}}_{jl}|^2{P_{jl}}+\delta^2}},
\end{eqnarray}
and the achievable rate can be expressed as
\begin{eqnarray}
		R_{mi}^{\rm{zf}}({\bf{P}})=\log_2(1+\gamma_{mi}^{\rm{zf}}).
\end{eqnarray}

Then, we reformulate the following max-min EE optimization problem as
 \begin{subequations}\label{OptW}
 	\begin{align}
 	\;\;\;\;\;\;\;\;&\underset{\left\{\bf{{P}}\right\}}{\rm{max}}\;\;\underset{mi}{\rm{min}}\;\;\eta_{mi}\label{OptW0}\\
 	&{\rm{s.t.}}\;\; R_{mi}^{\rm{zf}}({\bf{P}})\geq R_{mi}^{\rm{min}},m\in \mathcal{M},i\in\{1,2\},\label{OptW1}\\
 	&\;\;\;\;\;P_{mi}\leq P_{mi}^{\rm{max}},m\in \mathcal{M},i\in\{1,2\}.\label{OptW2}
 	\end{align}
 \end{subequations}

According to the scheme proposed in Section III, we directly transform (\ref{OptW}) into the following:
\setlength{\mathindent}{0cm}
\begin{subequations}\label{OptY}
	\begin{align}
	\;\;\;\;\;\;\;&\;\;\;\;\;\;\;\;\;\;\;\underset{\left\{{ \bf{P}},z\right\}}{\rm{max}}\;\;z\label{OptY0}\\
	&{\rm{s.t.}}\;\;{R_{mi}^{\rm{zf}}({\bf{P}})}\geq z+\eta_{\rm{EE}}'{P^{\rm{total}}_{mi}\left(P_{mi}\right)}, m\in \mathcal{M},i\in\{1,2\},\label{OptY1}\\ 
	&\;\;\;\;\;{\rm{(\ref{OptW1}), (\ref{OptW2})}}.\label{OptY4}
	\end{align}
\end{subequations}

Finally, we can adopt the same scheme proposed in Section IV.~B to solve the above problem. Compared with the first scheme, the proposed ZF scheme only needs to update P iteratively, as $\bf{V}$ is fixed based on~(\ref{ZF}).  In addition, one can easily obtain that the computational complexity of the ZF-based two-loop iterative algorithm is $\mathcal{O}(\log_2(1/\epsilon)L\log((6M)/\varepsilon^{\rm{o}}\zeta)/\log(\xi))$, which is much lower than that of the proposed alternatively iterative algorithm in Section IV. We provide the algorithm flow chart in Fig.~\ref{figure4}.

\section{Simulation Results}

In this section, we evaluate the performance of our proposed algorithms for the MIMO-NOMA mmWave system.  The default simulation parameters are set as follows: The BS is equipped with $N=32$ antennas and $M=4$ RF chains. We assume that there are enough users to form multiple two-user beam groups. The number of clusters in the mmWave channel is assumed $G=3$, $\alpha_{mi}^{g}\sim\mathcal{CN}(0,1)$ and $\theta_{mi}^{g}$ follows the uniform distribution at $[-\pi,\pi]$. Meanwhile, we define the signal-to-noise ratio (${\rm{SNR}}$) as ${\rm{SNR}}=P_{mi}^{\rm{max}}/\delta^2$, and assume that all users have the same maximum transmit power. The inefficiency of the power amplifier $\xi$ is set as $1/0.38$, while the circuit power consumption of each user is set as $P_{\rm{c}}=100$ mW. The minimum rate requirement is assumed the same for all users and set as $R_{mi}^{\rm{min}}=0.2$ bps/Hz.  For the sake of analysis, we refer to the jointly iterative ${\bf{{P}}}$ and  ${{\bf{V}}}$ algorithm as Scheme 1, and the ZF-based iterative ${\bf{{P}}}$ algorithm as Scheme 2.

\begin{figure}[t]
	\begin{center}
		\includegraphics[width=9cm,height=8cm]{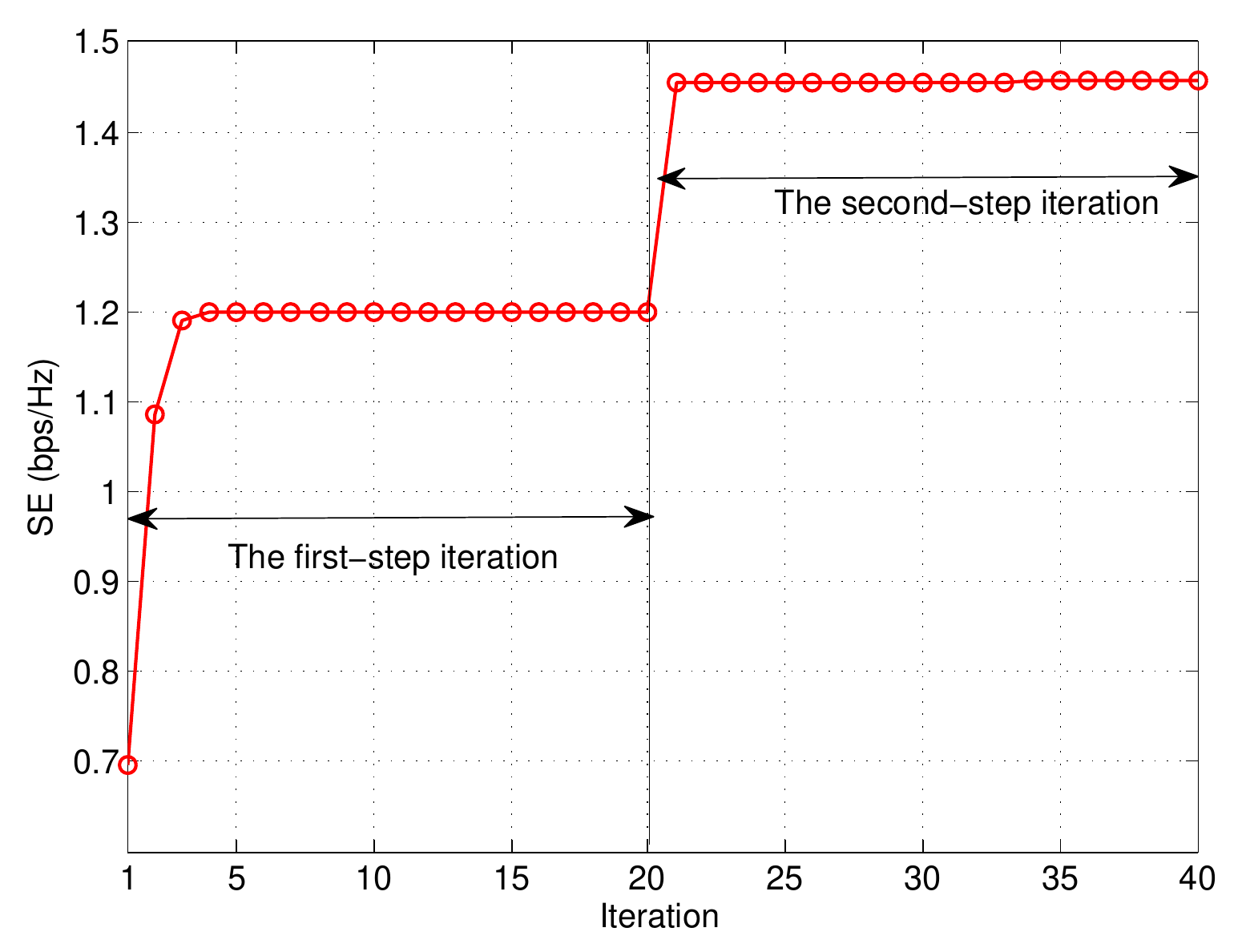}
		\caption{SE versus iteration (alternatively iterative ${\bf{{P}}}$ and  ${{\bf{V}}}$).}
		\label{Sfigure1}
	\end{center}
\end{figure}
\begin{figure}[t]
	\begin{center}
		\includegraphics[width=9cm,height=8cm]{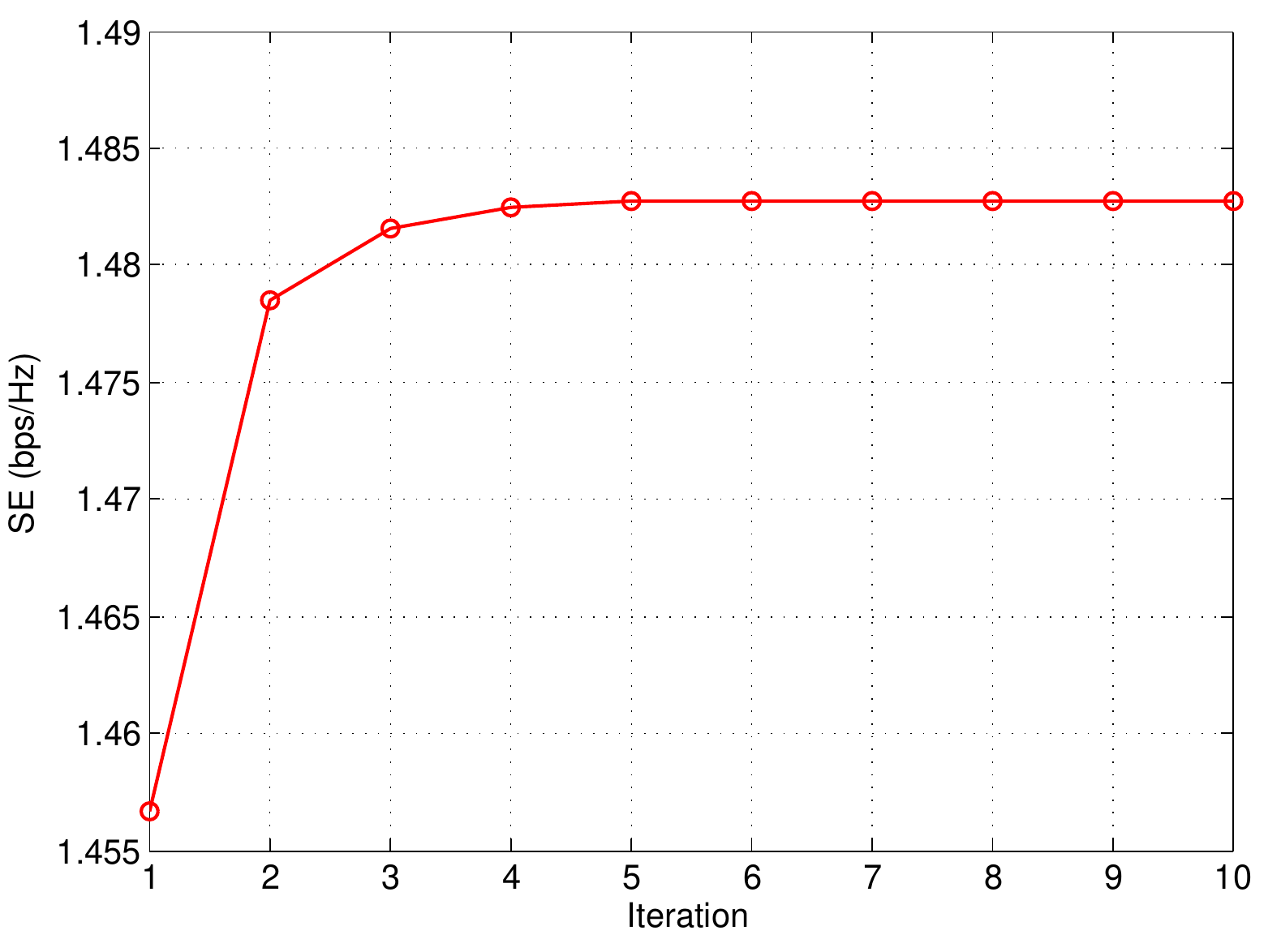}
		\caption{SE versus inner iteration.}
		\label{Sfigure2}
	\end{center}
\end{figure}
\begin{figure}[t]
	\begin{center}
		\includegraphics[width=9cm,height=8cm]{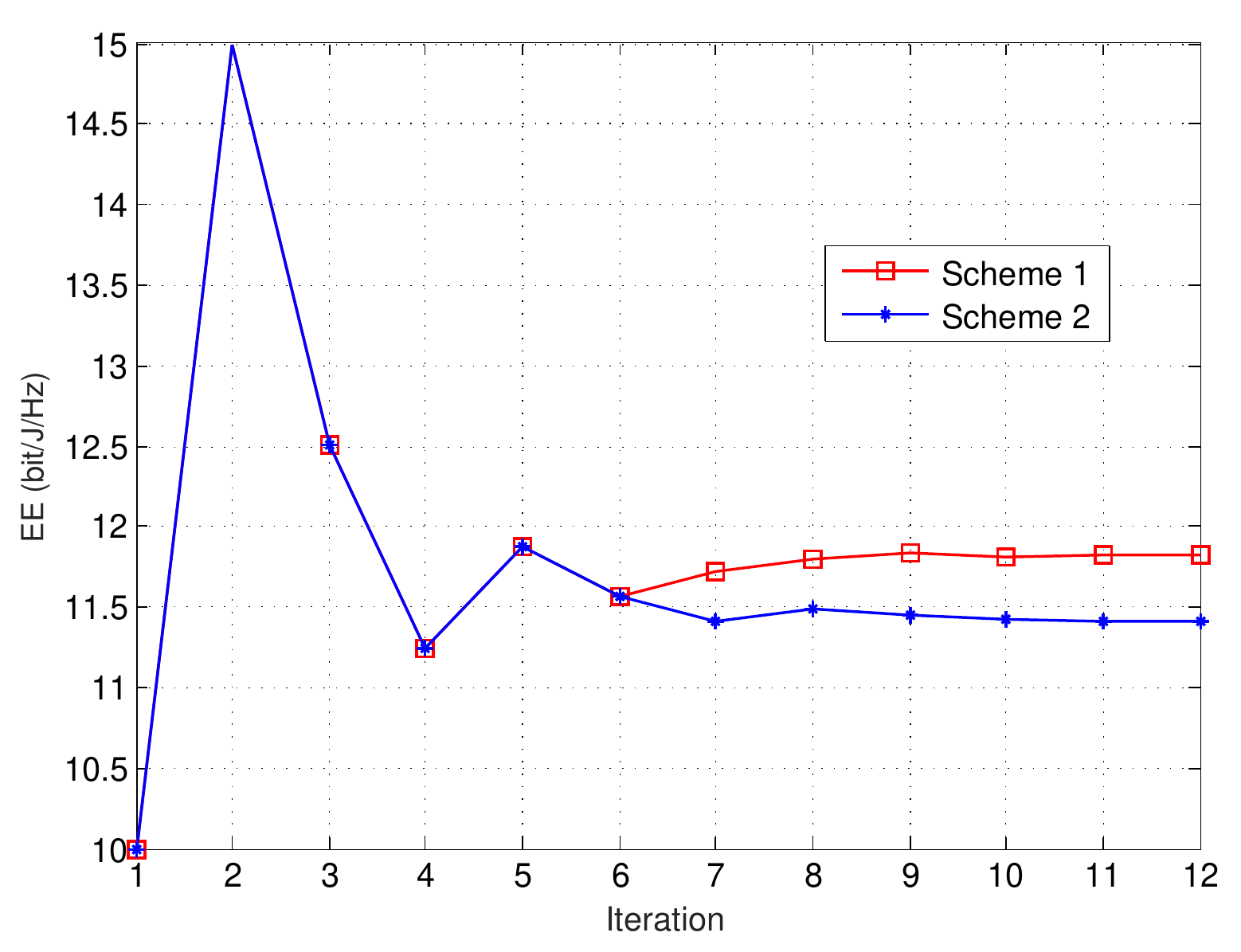}
		\caption{EE versus outer iteration.}
		\label{Sfigure3}
	\end{center}
\end{figure}
\begin{figure}[t]
	\begin{center}
		\includegraphics[width=9cm,height=8cm]{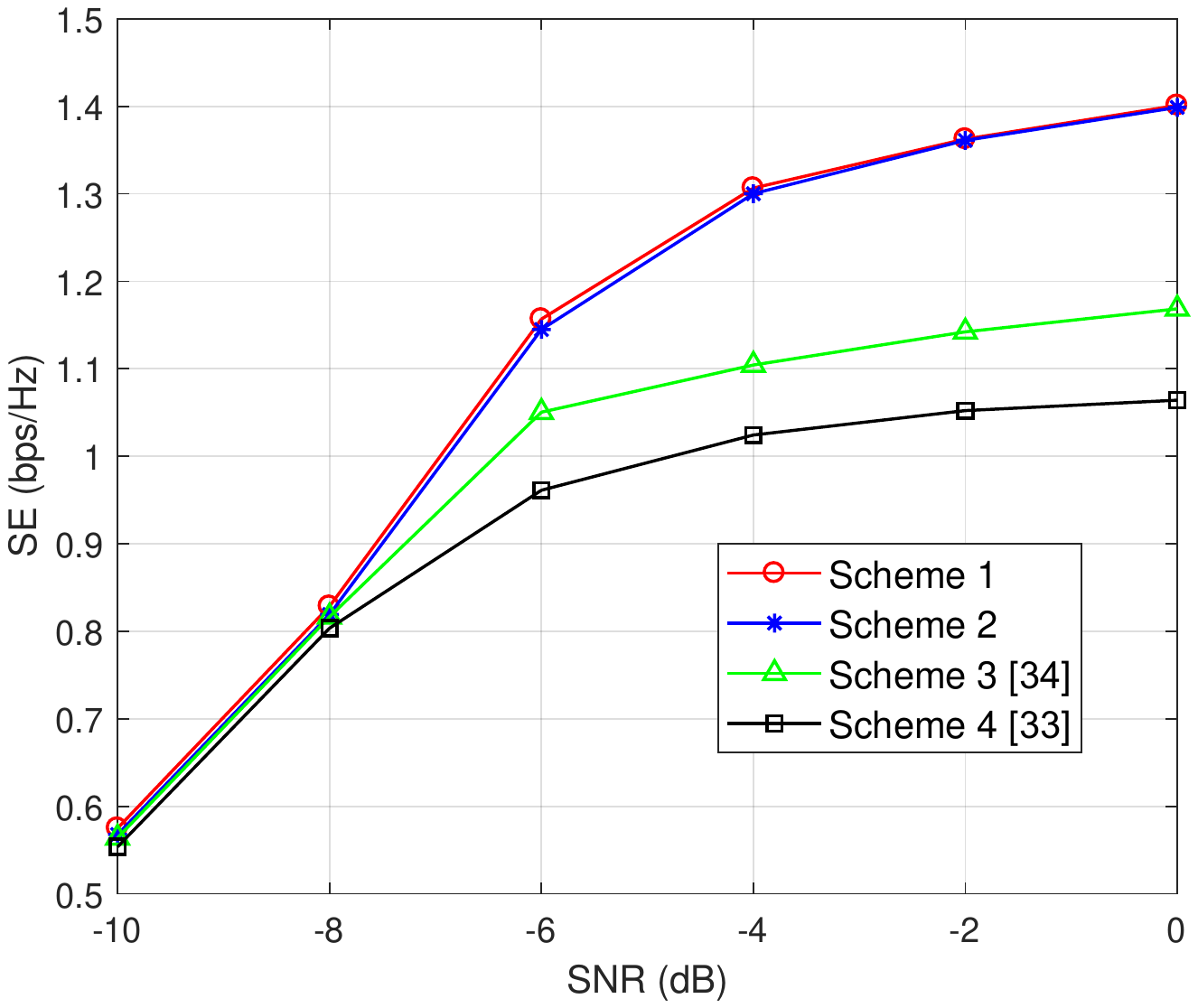}
		\caption{SE versus SNR for different schemes.}
		\label{Sfigure4}
	\end{center}
\end{figure}
\begin{figure}[t]
	\begin{center}
		\includegraphics[width=9cm,height=8cm]{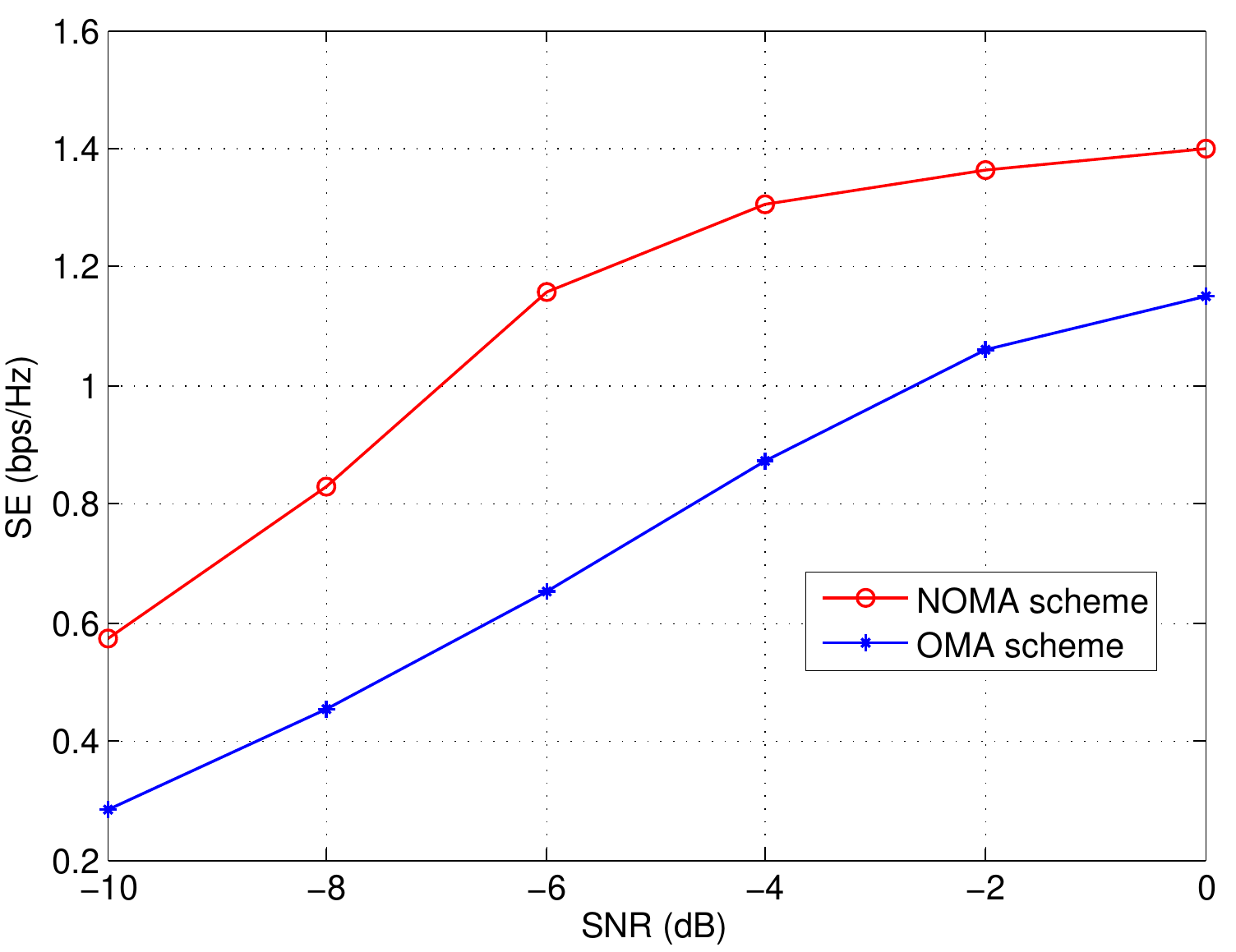}
		\caption{SE versus SNR for NOMA and OMA.}
		\label{Sfigure5}
	\end{center}
\end{figure}
\begin{figure}[t]
	\begin{center}
		\includegraphics[width=9cm,height=8cm]{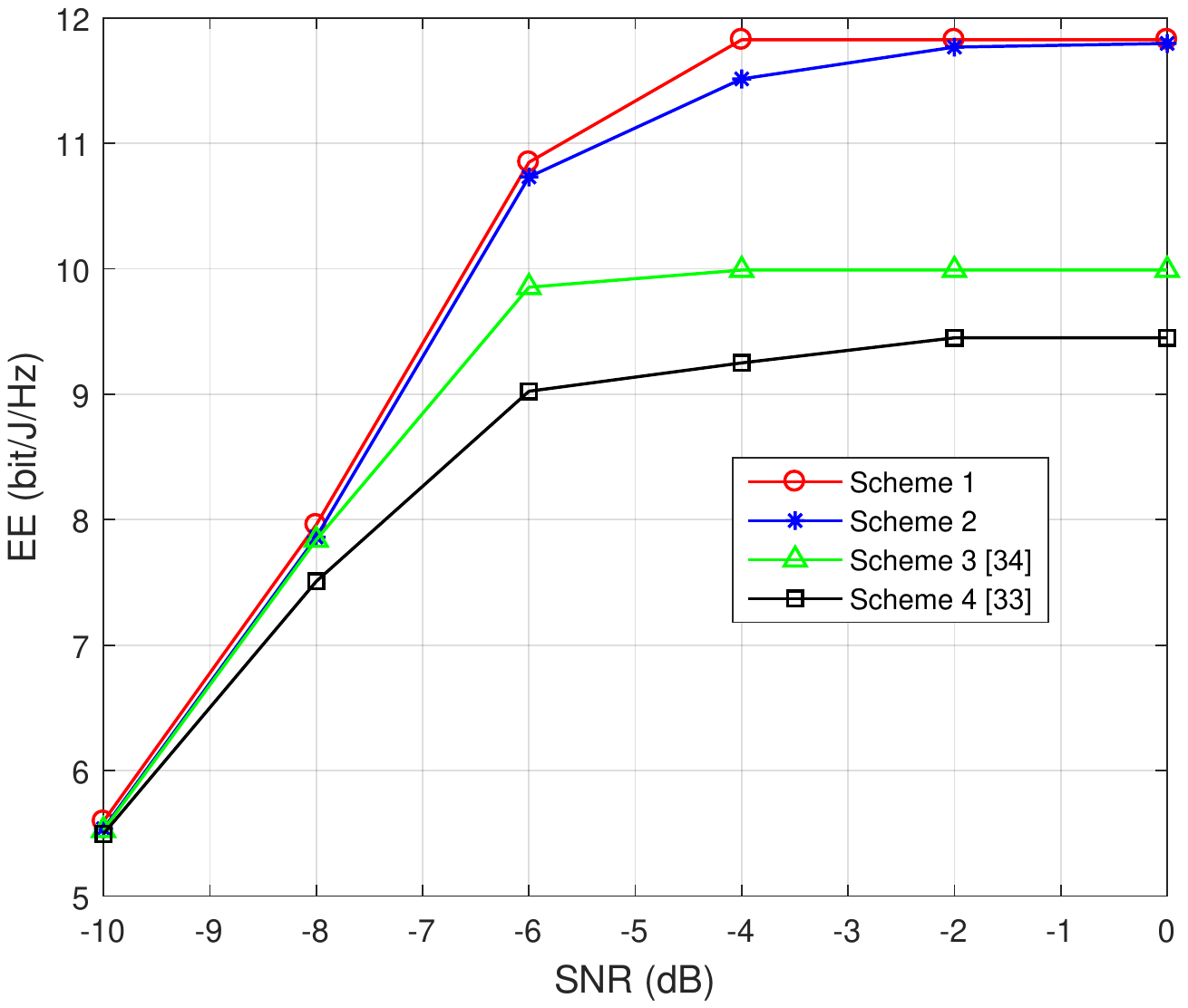}
		\caption{EE versus SNR for different schemes.}
		\label{Sfigure6}
	\end{center}
\end{figure}
\begin{figure}[t]
	\begin{center}
		\includegraphics[width=9cm,height=8cm]{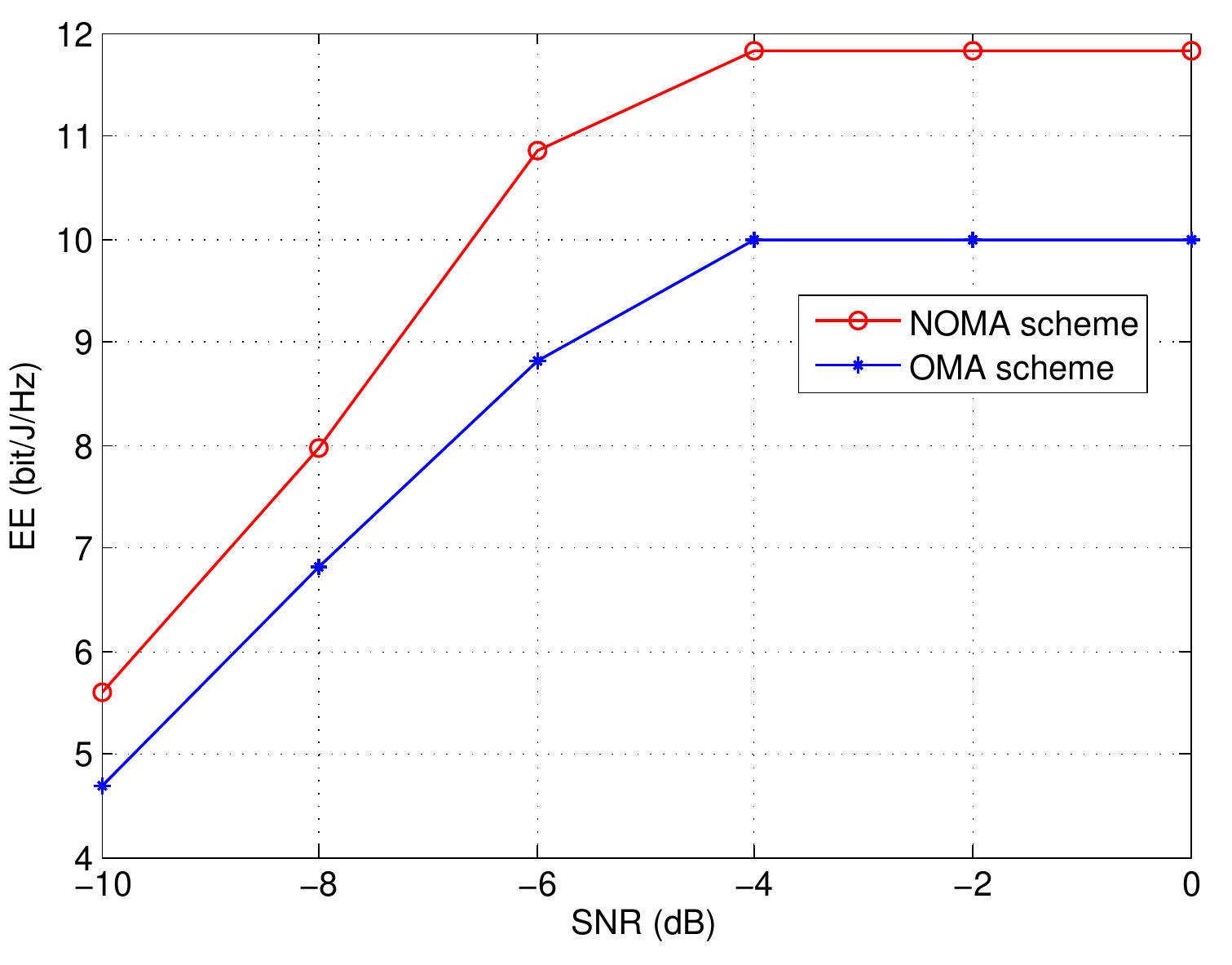}
		\caption{EE versus SNR for NOMA and OMA.}
		\label{Sfigure7}
	\end{center}
\end{figure}

To show the convergence performance of Scheme~1,  we first set $\eta_{\rm{EE}}'=0$ and plot the SE versus the iteration number (including ${\bf{{P}}}$ and  ${{\bf{V}}}$)  in Fig.~\ref{Sfigure1}. Here, ``The first-step iteration" stands for iteratively updating ${{\bf{V}}}$, namely solving problem~(\ref{OptD}), and ``The second-step iteration" stands for iteratively updating ${\bf{{P}}}$, i.e., solving problem~(\ref{OptG}). One can observe that these two steps converge fast. For example, 4 iterations are needed for the first step to converge, while only  2 iterations are needed for the second step  to converge. For Scheme~2, it is obvious that we only need to iteratively update ${{\bf{P}}}$, namely ``The second-step iteration."  In addition, Fig.~\ref{Sfigure2} shows the SE versus the number of inner iterations  when $\eta_{\rm{EE}}'=0$, i.e., solving problem~(\ref{OptC}). We find that 4 iterations are needed for convergence.

Fig.~\ref{Sfigure3} plots the EE versus the number of outer iterations for the proposed two schemes. As expected, the EE fluctuates due to the adopted bi-section method. Moreover, we find that Schemes 1 and 2 almost simultaneously converge after about 8 iterations. Note that although they have the same iteration number, Scheme 2 does not need the ``first-step iteration" presented in Fig.~\ref{Sfigure1}. Therefore, Scheme 2 can obtain a stable solution faster in comparison with Scheme 1. However,  it can be observed that the obtained EE in Scheme~2 is lower than that in Scheme~1. 

{Fig.~\ref{Sfigure4} shows the SE versus SNR for different schemes, where  we set $\eta_{\rm{EE}}'=0$. In addition to the proposed Schemes 1 and 2, we also show the SE results provided by the conventional uplink decoding order which depends on the  strength of the effective channel and the user group or cluster~\cite{Wang_CL_2018} (Scheme 3), and another baseline scheme (Scheme 4), where the weak-interference is not removed among clusters~\cite{Kimy}. Among all considered schemes, Scheme 1 is always the best, followed by Schemes 2, 3, and 4. The gap between the proposed schemes (Schemes 1 and 2) and Scheme 3 illustrates the effectiveness of removing the interference based on the strength, especially in high SNR. Meanwhile, the gap between Scheme 3 and Scheme 4 shows the necessity of removing the weak-interference among the clusters.}  Furthermore, in Fig.~\ref{Sfigure5} we compare the SE of NOMA (i.e., Scheme 1) with  that of the conventional OMA scheme, where users belonging to the same beam group are served by time duplex division access. It is obvious that the proposed NOMA scheme can obtain a higher SE than the conventional OMA one. 

{Fig.~\ref{Sfigure6}  plots the EE versus SNR for the above four schemes. For all four schemes, the EE first increases and then saturates as the SNR increases. In low SNR regime, the small increase of the SNR can yield a large increase in SE (as shown in~Fig.~\ref{Sfigure4}), and thus, a large increase in EE. In contrast, in high SNR regime, a large increase of the SNR only leads to a small increase in SE. As a result, the extra available power may not be used for increasing the EE. As in Fig.~\ref{Sfigure4}, the proposed schemes always outperform the baselines, especially for high SNR.} Finally, the EE for  NOMA (i.e., Scheme 1) and OMA scheme are compared in Fig.~\ref{Sfigure7}. One can observe that our proposed scheme has a higher EE when compared with OMA.
\section{Conclusion}
In this paper, we have investigated the EE problem in an uplink MIMO-NOMA mmWave system. We formulated a max-min EE optimization problem involving a joint optimization of the transmit power at the users and detection matrix at the BS. We proposed two schemes to solve this problem. Simulation results confirmed that the proposed NOMA schemes outperform other NOMA baseline algorithms, as well as the conventional OMA scheme in terms of SE and EE.

\bibliographystyle{ieeetr}
\bibliography{references}
\end{document}